\newcommand\independent{\protect\mathpalette{\protect\independent}{\perp}} 
\def\independent#1#2{\mathrel{\rlap{$#1#2$}\mkern2mu{#1#2}}}
\newcommand{\iid}{\stackrel{\text{iid}}{\sim}}
\newcommand{\ft}{\mathcal{F}}
\newcommand{\m}{\mathrm{M}}
\newcommand{\spa}{\mathrm{Spa}}
\newcommand{\sm}{\widehat{\mathrm{M}}}
\newcommand{\dou}{\mathrm{Doub}}
\newcommand{\cir}{\mathrm{Circ}}
\newcommand{\F}{\mathbb{F}} 
\newcommand{\mR}{\mathbb{R}} 
\newcommand{\mZ}{\mathbb{Z}}
\newcommand{\mC}{\mathbb{C}}
\newcommand{\pp}{\mathbb{P}}
\newcommand{\E}{\mathbb{E}}
\newcommand{\e}{\varepsilon}
\DeclareMathOperator{\diag}{diag}
\newcommand{\cS}{\mathcal{S}}
\newcommand{\cT}{\mathcal{T}}
\theoremstyle{definition}
\newtheorem{definition}{Definition}
\theoremstyle{plain}
\newtheorem{thm}{Theorem}
\theoremstyle{plain}
\newtheorem{prop}{Proposition}
\theoremstyle{plain}
\newtheorem{lemma}{Lemma}
\theoremstyle{plain}
\newtheorem{corol}{Corollary}
\theoremstyle{plain}
\theoremstyle{remark}
\newtheorem{remark}{Remark}
\theoremstyle{discussion}
\theoremstyle{plain}
\begin{document}

\title{Universal Polar Coding and Sparse Recovery}

\author{
  \IEEEauthorblockN{Emmanuel Abbe}
  \IEEEauthorblockA{Information Processing Group, EPFL\\
    Email: emmanuel.abbe@epfl.ch}
}


\maketitle

\begin{abstract}
  \boldmath
This paper investigates universal polar coding schemes. In particular, a notion of ordering (called convolutional path) is introduced between probability distributions to determine when a polar compression (or communication) scheme designed for one distribution can also succeed for another one. The original polar decoding algorithm is also generalized to an algorithm allowing to learn information about the source distribution using the idea of checkers. These tools are used to construct a universal compression algorithm for binary sources, operating at the lowest achievable rate (entropy), with low complexity and with guaranteed small error probability.  

In a second part of the paper, the problem of sketching high dimensional discrete signals which are sparse is approached via the polarization technique. It is shown that the number of measurements required for perfect recovery is competitive with the $O(k \log (n/k))$ bound (with optimal constant for binary signals), meanwhile affording a deterministic low complexity measurement matrix. 

\end{abstract}


\section{Introduction}

A new technique called `polarization' has recently been introduced by Ar{\i}kan in \cite{ari} to construct efficient channel coding schemes. The codes resulting from this technique, called polar codes, have several nice attributes: (1) they are linear codes generated by a low-complexity deterministic matrix (2) they can be analyzed mathematically and bounds on the error probability (exponential in the square root of the block length) can be {\it proved} (3) they have a low encoding and decoding complexity (4) they allow to reach the Shannon capacity on any discrete memoryless channels (DMC). These codes are indeed the first codes with low decoding complexity that are provably capacity achieving on any DMC.

Since \cite{ari}, the polarization technique has been generalized to various settings. For example, it has been used in \cite{korada} for rate-distortion via duality with test channels, in \cite{elgamal,shamai,vardy} for wiretap channels and information secrecy, and in \cite{2mac,mmac} for a multi-user problem (multiple accessing). 

In this paper, we investigate the problem of robustness of the polar coding schemes with respect to the knowledge of source or channel distribution. The perfect knowledge of this distribution is never available, and it is important that any potentially practical scheme shows some robustness to this knowledge. We hence develop several tools to construct universal polarization schemes. 

We then consider the problem of sketching high-dimensional sparse signal using the polarization technique. The hope being to leverage properties (1)-(4) to construct a deterministic low-complexity sketching matrix and an efficient sparse recovery algorithm. Since the method is defined for signals valued in finite sets, it is of interest to lift the construction to the real setting. Yet in this paper, we focus our attention on the sketching problem for signals that are discrete, motivated by applications dealing with such signals. 
This occurs for example in network monitoring problems \cite{varghese,gilbert}. 
We will see that, just like one can exploit sparsity in the domain, the sparsity in the magnitude (signals taking values in finite sets) can be exploited to develop an efficient sketching method via the polarization technique. 

Some results in this paper have been presented in \cite{abbe}.   
\subsection{Channel and source polarization}

Ar{\i}kan shows in \cite{ari} that an arbitrary binary input discrete memoryless channel $W$ can be polarized as follows: $n$ independent uses of $W$ can be transformed into $n$ successive uses of synthesized binary input channels that have (except for a vanishing fraction) a symmetric capacity 
which tends to either 0 or 1 (with $n$). 
In \cite{qpol}, this result is generalized to $q$-ary input alphabets where $q$ is prime, and in \cite{mmac} it is extended to $q$ being powers of two (considering $q$ to be a power of two has computational advantages, but the case of powers of prime follows too). We state here the result of \cite{qpol} for $q$ prime. Notation: $X^n :=(X_1,\dots,X_n)$.
\begin{thm}\label{ari}
Let $W$ be a $q$-ary input discrete memoryless channel with $q$ prime, $n$ a power of 2, and let $U^n$ be i.i.d. uniform random variables on $\F_q$. Let $X^n =  U^n G_n$, where $G_n=    \bigl[\begin{smallmatrix} 
      1 & 0 \\
      1 & 1 \\
   \end{smallmatrix}\bigr]^{\otimes \log_2(n)}$,
and $Y^n$ be the output of $n$ independent uses of $W$ when the input is $X^n$. 
Then, for any $\delta \in (0,1)$,
\begin{align}
&\frac{1}{n}|\{i: I(U_i; Y^n U^{i-1}) > \delta\}| \stackrel{n \to \infty}{\longrightarrow} I(W),
\end{align}
where $I(W)$ is the mutual information of $W$ (with a uniform input distribution).
\end{thm}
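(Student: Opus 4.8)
The plan is to peel off the recursive structure of $G_n$, reduce everything to the single-step $2\times2$ transform, and then polarize a bounded martingale. Write $n=2^m$. Expanding $G_n=\bigl[\begin{smallmatrix}1&0\\1&1\end{smallmatrix}\bigr]^{\otimes m}$, the channel experienced by $U_i$ given $(Y^n,U^{i-1})$ is obtained by applying $m$ times the basic transform $W\mapsto(W^-,W^+)$, where, with $U_1,U_2$ i.i.d.\ uniform on $\F_q$, $X_1=U_1+U_2$, $X_2=U_2$, and $Y_1,Y_2$ the outputs of two independent copies of $W$,
\[
W^-(y_1y_2\mid u_1)=\tfrac1q\sum_{u_2\in\F_q}W(y_1\mid u_1+u_2)W(y_2\mid u_2),\qquad
W^+(y_1y_2u_1\mid u_2)=\tfrac1q W(y_1\mid u_1+u_2)W(y_2\mid u_2),
\]
so that $I(W^-)=I(U_1;Y_1Y_2)$ and $I(W^+)=I(U_2;Y_1Y_2U_1)$. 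Concretely $I(U_i;Y^nU^{i-1})=I(W^{b(i)})$, where $i\mapsto b(i)\in\{-,+\}^m$ is a bijection (the, possibly bit-reversed, base-$2$ expansion of $i-1$, reading $0$ as $-$ and $1$ as $+$). Hence, letting $B_1,B_2,\dots$ be i.i.d.\ uniform on $\{-,+\}$ and $W_0:=W$, $W_t:=W_{t-1}^{B_t}$, one has $\tfrac1n|\{i:I(U_i;Y^nU^{i-1})>\delta\}|=\pp[I(W_m)>\delta]$, and it suffices to analyze $I_t:=I(W_t)$ as $t\to\infty$.

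Next I would record the two basic properties of the transform: \emph{conservation}, $I(W^-)+I(W^+)=2I(W)$, which is the chain rule $I(U_1U_2;Y_1Y_2)=I(U_1;Y_1Y_2)+I(U_2;Y_1Y_2U_1)$ combined with the fact that $(u_1,u_2)\mapsto(u_1+u_2,u_2)$ is a bijection of $\F_q^2$ (hence $I(U_1U_2;Y_1Y_2)=I(X_1X_2;Y_1Y_2)=2I(W)$); and \emph{ordering}, $I(W^-)\le I(W)\le I(W^+)$, by data processing. Taking $\log$ to base $q$ so that $I_t\in[0,1]$, conservation says precisely that $I_t$ is a martingale for the filtration generated by $B_1,\dots,B_t$; being bounded, it converges a.s.\ and in $L^1$ to some $I_\infty$, with $\E I_\infty=\E I_m=I_0=I(W)$ for all $m$. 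Since $|I_{t+1}-I_t|=\tfrac12(I(W_t^+)-I(W_t^-))$ regardless of $B_{t+1}$ (by conservation), and $\sum_t\E[(I_{t+1}-I_t)^2]=\lim_m\E[I_m^2]-\E[I_0^2]\le1$ by orthogonality of the increments, we get $\E[(I(W_t^+)-I(W_t^-))^2]\to0$, i.e.\ $I(W_t^+)-I(W_t^-)\to0$ in probability.

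The heart of the matter is to upgrade this to $I_\infty\in\{0,1\}$ a.s.; this is precisely where $q$ prime enters, and I expect it to be the main obstacle. What is needed is a rigidity lemma: for every $\e>0$ there is $\tau(\e)>0$ such that every $q$-ary-input channel $V$ with $\e\le I(V)\le1-\e$ satisfies $I(V^+)-I(V^-)\ge\tau(\e)$. Granting this, if $\pp[\e\le I_\infty\le1-\e]>0$ for some $\e>0$ then (by a.s.\ convergence) on a positive-probability event $I_t$ eventually lies in $[\e/2,1-\e/2]$, whence $I(W_t^+)-I(W_t^-)$ is eventually $\ge\tau(\e/2)$ there, contradicting convergence to $0$ in probability. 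To prove the rigidity lemma one notes that $I(V)$, $I(V^-)$, $I(V^+)$ depend on $V$ only through the law of the posterior vector $\bigl(\pp[X=a\mid Y]\bigr)_{a\in\F_q}$ on the simplex, and the set of such laws is weak-$*$ compact with these quantities continuous on it; so it is enough to exclude a channel $V$ with $I(V^-)=I(V)=I(V^+)\in(0,1)$. Equality throughout the ordering inequalities imposes conditional-independence constraints on the joint law of $(U_1,U_2,Y_1,Y_2)$ (such as $U_1$ and $U_2$ being independent given $(Y_1,Y_2)$); iterating the transform and propagating these constraints forces the posterior of the input given the output to be uniform on a coset of some additive subgroup $H\le\F_q$, and since $q$ is prime the only possibilities are $H=\{0\}$ (so $I(V)=1$) and $H=\F_q$ (so $I(V)=0$), contradicting $I(V)\in(0,1)$. (For $q=2$ one can bypass the compactness argument by running the same martingale reasoning on the Bhattacharyya parameter $Z(W)=\sum_y\sqrt{W(y\mid0)W(y\mid1)}$, using $Z(W^+)=Z(W)^2$, $Z(W^-)\le2Z(W)-Z(W)^2$ and the two-sided bounds between $Z$ and $I$ to force $Z_\infty\in\{0,1\}$, hence $I_\infty\in\{0,1\}$.)

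It then remains to read off the counting statement. Since $I_m\to I_\infty$ a.s.\ with $I_\infty\in\{0,1\}$ and $0<\delta<1$, for almost every sample path $\mathbf 1\{I_m>\delta\}=\mathbf 1\{I_\infty=1\}$ for all large $m$, so bounded convergence gives
\[
\tfrac1n\bigl|\{i:I(U_i;Y^nU^{i-1})>\delta\}\bigr|=\pp[I(W_m)>\delta]\ \stackrel{n\to\infty}{\longrightarrow}\ \pp[I_\infty=1].
\]
Finally, since $I_\infty$ takes only the values $0,1$ and the martingale converges in $L^1$ with constant mean $I(W)$, we have $\pp[I_\infty=1]=\E I_\infty=I(W)$, which is the assertion of the theorem.
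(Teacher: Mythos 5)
The paper does not prove Theorem~\ref{ari}: it is a quoted result from \cite{qpol} (the text says ``We state here the result of \cite{qpol} for $q$ prime''). Your reconstruction follows the standard argument underlying that reference --- recursive reduction to the single-step transform $W\mapsto(W^-,W^+)$, bounded-martingale convergence of $I_t=I(W_t)$, the orthogonal-increments estimate forcing $I(W_t^+)-I(W_t^-)\to0$ in probability, and a compactness/rigidity lemma where primality of $q$ enters --- and its skeleton is correct.

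The one step that deserves tightening is the structural half of the rigidity lemma. You say that ``iterating the transform and propagating these constraints forces the posterior of the input given the output to be uniform on a coset of some additive subgroup $H\le\F_q$.'' No iteration is needed, and the precise form of the conclusion matters. A single application of the conditional independence of $U_1$ and $U_2$ given $(Y_1,Y_2)$, expressed via i.i.d.\ posterior vectors $(P_1,P_2)$ on the simplex (so that $(a,b)\mapsto P_1(a+b)P_2(b)$ must factor), yields: for a.e.\ pair, either $P_2$ is a point mass, or $\supp(P_1)$ is invariant under a nonzero translation --- hence full, since $q$ is prime --- and the log-ratio argument then forces $P_1$ uniform. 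Note this is a statement about \emph{pairs}. A pointwise conclusion that ``each posterior is uniform on a coset'' would still allow a nontrivial mixture of point masses and uniform posteriors, which has $I(V)\in(0,1)$. One must invoke the independence of $P_1,P_2$ to finish: $\pp[P\ \text{not uniform}]\cdot\pp[P\ \text{not a point mass}]=0$, so the posterior law concentrates entirely on a single extreme and $I(V)\in\{0,1\}$. With that step made explicit, the rigidity lemma, and therefore the theorem, follow.
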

Theorem \ref{ari} can then be used to show the following polarization phenomenon for sources. 
\begin{thm}\label{ari2}
Let $X^n$ be $n$ i.i.d. random variables with distribution $p$ on $\F_q$, $n$ a power of 2, and let $U^n=X^n G_n$, where $G_n=    \bigl[\begin{smallmatrix} 
      1 & 0 \\
      1 & 1 \\
   \end{smallmatrix}\bigr]^{\otimes \log_2(n)}$. Then, for any $\delta \in (0,1)$, 
\begin{align}
&\frac{1}{n} |\{i: H(U_i | U^{i-1}) > \delta\}|  \stackrel{n \to \infty}{\longrightarrow}  H(p),
\end{align}
where $H(p)$ is the entropy of the distribution $p$. 
\end{thm}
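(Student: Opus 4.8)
The plan is to deduce Theorem~\ref{ari2} from the channel-polarization statement of Theorem~\ref{ari}, by running the latter on an auxiliary ``dummy'' channel whose noise is the source. I normalize entropies and mutual informations so that a uniform variable on $\F_q$ has entropy $1$ (as is evidently intended in Theorems~\ref{ari}--\ref{ari2}). Let $W$ be the additive channel on $\F_q$ with $W(y\mid x)=p(y-x)$, i.e.\ $Y=x+Z$ with $Z\sim p$ added in $\F_q$. Since the output is uniform whenever the input is, and $H(Y\mid X)=H(Z)=H(p)$, we have $I(W)=1-H(p)$.

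Now instantiate Theorem~\ref{ari} for $W$: take $V^n$ i.i.d.\ uniform on $\F_q$, let $\tilde X^n=V^nG_n$, and let $Y^n=\tilde X^n+Z^n$ be the output of $n$ independent uses of $W$, where $Z^n$ is i.i.d.\ $\sim p$ and independent of $V^n$. Theorem~\ref{ari} then gives, for every $\delta\in(0,1)$,
\[
\tfrac1n\bigl|\{i:\,I(V_i;Y^nV^{i-1})>\delta\}\bigr|\ \stackrel{n\to\infty}{\longrightarrow}\ I(W)=1-H(p).
\]
Since $Z^n$ has the same law as the source $X^n$ of Theorem~\ref{ari2}, writing $U^n:=Z^nG_n$ it remains to convert this into a statement about $H(U_i\mid U^{i-1})$.

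The key step is the identity
\[
I(V_i;Y^nV^{i-1})\ =\ 1-H(U_i\mid U^{i-1}),\qquad i=1,\dots,n,
\]
equivalently $H(V_i\mid Y^n,V^{i-1})=H(U_i\mid U^{i-1})$ (using $H(V_i\mid V^{i-1})=1$, as $V^n$ is i.i.d.\ uniform). For $n=1$ this is immediate: $U_1=Z_1$, $Y_1=V_1+Z_1$, and $H(V_1\mid Y_1)=H(Z_1)=H(p)$ because $V_1$ is uniform and independent of $Z_1$. For general $n$ I would prove it by induction on $\log_2 n$, mirroring the recursion that underlies Theorem~\ref{ari}: decomposing $G_n$ into two copies of $G_{n/2}$ and one layer of the kernel $\bigl[\begin{smallmatrix}1&0\\1&1\end{smallmatrix}\bigr]$, both the channel combining and the source transform are governed by the same elementary map $(a,b)\mapsto(a+b,\,b)$, so the ``$-$'' branch of the channel recursion matches the entropy of a sum of two half-blocks and the ``$+$'' branch matches the entropy of one half-block given that sum --- which is exactly how $H(U_i\mid U^{i-1})$ decomposes. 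A direct route also works: conditioned on $\{Y^n=y,\ V^{i-1}=v^{i-1}\}$ one pushes the noise $Z^n$ through the invertible map $G_n$, sums out $V_{i+1},\dots,V_n$, and finds that the posterior of $V_i$ is a reflection $v\mapsto a-v$ of a conditional law of one coordinate of the source transform given $i-1$ others; reflections preserve entropy, and averaging over $(y,v^{i-1})$ reproduces the right joint law. I expect this identity --- the precise source/channel duality for the additive channel --- to be the only real obstacle: over $\F_2$ it comes out cleanly with $U^n=X^nG_n$ on the nose, whereas for general prime $q$ one must be slightly careful, since undoing the encoder $V^n\mapsto V^nG_n$ from $Y^n$ introduces $G_n^{-1}$ rather than $G_n$ on the source side (equivalently, the ``$-$'' branch convolves with the law of $-Z$ instead of $Z$); this does not affect the limiting fraction and can be reconciled, but it is the point needing attention.

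Finally I would assemble the pieces. Since $\{i:I(V_i;Y^nV^{i-1})>\delta\}=\{i:H(U_i\mid U^{i-1})<1-\delta\}$, taking complements in the displayed limit gives $\tfrac1n|\{i:H(U_i\mid U^{i-1})\ge 1-\delta\}|\to H(p)$ for every $\delta\in(0,1)$, i.e.\ $\tfrac1n|\{i:H(U_i\mid U^{i-1})\ge t\}|\to H(p)$ for all $t\in(0,1)$. A two-sided sandwich, using $\{H\ge t'\}\subseteq\{H>t\}\subseteq\{H\ge t\}$ for $0<t<t'<1$, upgrades this to $\tfrac1n|\{i:H(U_i\mid U^{i-1})>\delta\}|\to H(p)$ for every $\delta\in(0,1)$, which is the assertion; the same estimates also show that the values $H(U_i\mid U^{i-1})$ themselves polarize to $\{0,1\}$.
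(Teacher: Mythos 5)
Your proposal follows essentially the same route as the paper's Section~\ref{duality}: specialize Theorem~\ref{ari} to the additive-noise channel with noise law $p$, note $I(W)=1-H(p)$, and convert each $I(U_i;Y^nU^{i-1})$ into a conditional entropy of the polarized noise via the substitution $U^n=(Y^n\ominus Z^n)G_n^{-1}$ together with the independence of $Y^n$ and $Z^n$ --- your ``direct route'' is exactly the paper's three-line computation, and the inductive argument you also sketch is more machinery than is needed. You are, if anything, more scrupulous than the paper on the one delicate point: the paper simply asserts ``$G_n^{-1}=G_n$'', which holds over $\F_2$ but fails over $\F_q$ for prime $q>2$ (there $G_n^{-1}=D_nG_nD_n$ with $D_n=\diag(1,-1)^{\otimes\log_2 n}$), so the duality literally polarizes $Z^nG_n^{-1}$ rather than $Z^nG_n$; you flag this while the paper does not, but like the paper you do not fully close it, and the reconciliation is not entirely cost-free since the sign pattern in $D_n$ changes the marginal laws of some noise coordinates whenever $p$ is not symmetric under $x\mapsto -x$.
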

We will see in Section \ref{duality} that previous result follows from Theorem \ref{ari} via a duality argument. A slightly more general result is presented in \cite{ari3}.

Note that all entropies and mutual informations are computed with logarithms in base $q$ (where $q$ is the input or source alphabet size).

{\it A coding scheme from Theorem \ref{ari}.} The limit in the theorem implies that for $n$ large enough and except for a vanishing fraction of indices $i$, $I(U_i; Y^n U^{i-1})$ must be close to either 0 or 1. Hence, this suggests a coding scheme: on the indices $i$ for which the channel is good, i.e., $I(U_i; Y^n U^{i-1})$ is close to 1, put uncoded information bits in $U_i$, and for the other indices, put frozen bits that are predetermined and revealed to the decoder. This defines the vector $U^n$. Then, the vector $X^n$ is sent over $n$ independent uses of the channel. Note that the rate of this code is given by the logarithm of the number of information bits in $U^n$ divided by $n$, and by Theorem \ref{ari}, this can be made arbitrarily close to $I(W)$. Now the receiver knows two things: the location\footnote{No analytical formula is known to compute these indices. They are found with algorithms, as in \cite{vardy2}} of the indices $i$ containing information and frozen bits, and the value of the frozen bits (on symmetric channels, the frozen bits can be all chosen to be zero).
Hence, from the output $Y^n$ of $X^n$, the receiver starts by decoding the first component $U_i$ which is not frozen. By virtue of Theorem \ref{ari}, one of the two possible value of $U_i$ will have (w.h.p.) a probability close to one and hence, the decoder has a small probability of decoding $U_i$ incorrectly. This process is then iterated to decode successively the entire vector $U^n$. An analysis of the scaling\footnote{To show achievability, the speed of convergence to the polarized channels matters, and it is shown to be roughly $2^{-\sqrt{n}}$ in \cite{ari2}.} of the bit error probability (decoding wrongly a component in $U_i$) allows to conclude that w.h.p. errors cannot propagate in this scheme, and hence, this scheme achieves the uniform mutual information of the channel. 
A remarkable feature of this coding scheme is that the encoding and decoding complexity is shown to be $O(n \log n)$. 


{\it A coding scheme from Theorem \ref{ari2}.} The limit in the theorem implies that for $n$ large enough and except for a vanishing fraction of indices $i$, $H(U_i|U^{i-1})$ is close to either 0 or 1. Hence, the transformation $G_n$ extracts the randomness in $X^n$, which is initially uniformly dissipated over the $n$ components, into specific components indexed by the $i$'s such that $H(U_i|U^{i-1})$ is close to 1. 
 Lossless compression can then be performed as follows: from a given source output $X^n$, compute $U^n$ and store the components of $U^n$ which do not have an entropy close to 0. Note that, from Theorem \ref{ari2}, the compression rate can be made arbitrarily close to $H(p)$ (lowest possible rate). 
 For the reconstruction, since the components with entropy close to 0 can be recovered correctly with high probability given the past components, we can proceed successively in an analogue manner as for the channel decoding problem. The speed of polarization is shown to scale similarly as in the channel case, and again, the encoding and decoding complexity of this source coding scheme is only $O(n \log n)$. 


\subsection{Goals}

In this paper, we are interested in analyzing how sensitive the performance of the previous source/channel coding scheme is to the knowledge of the source/channel distribution. 
The knowledge of source/channel distribution is used at two moments for each problem. In the channel coding problem, it is first used to identify the location of the ``good channels'', or equivalently, the location of the indices $i$ where the information bits shall be sent. It is then used again in the decoding process, to compute the probabilities that an information bit $U_i$ is equal each element of $\F_q$ (from the polarization phenomenon, we know that one of these probabilities should have a probability close to 1, but one still needs to compute which one it is). Similarly, for source coding, the knowledge of the source distribution is first used to find the components of $U^n$ which must be stored, and then in the reconstruction part, to compute the value of each non-stored components. 

We will hence address the problem of constructing polar coding schemes which can compress losslessly sources without requiring perfect knowledge of their distributions, or which can communicate reliably over channels without requiring perfect knowledge of the channel distribution.
The application to the channel setting follows then from Section \ref{duality}, where the duality between the source and channel problem is made explicit. 

We will then consider the problem of sparse data recovery, using polar codes. From the discussion on the source polarization theorem above, a connection to the sketching problem is apparent: if we sense the signal $U^n$ only in the components $i$ for which $H(U_i|U^{i-1})$ is close to 1, we obtain a sampling of the signal which allows perfect recovery of the full signal, with a significantly reduced number of measurements. There are however several differences between a compressed sensing setting \cite{candes,donoho} and the source polarization setting; in particular, in the latter setting the source is random with a known distribution and it is valued in a finite field (of arbitrarily large cardinality), whereas it is real and with no prior (besides sparse) in compressed sensing. Hence, a first question is to ask how sparsity, i.e., the property of having many components that are 0, is modeled for  such random signals, and how much the choice of a specific sparse probability distribution matters. This part can be investigated using our results on universal source polarization, which establishes the connection between the two parts of this paper.    

\section{Results}
A universal compression algorithm for binary sources is introduced in Section \ref{main1}.
Theorem \ref{mainthm} shows that this algorithm performs at the lowest achievable rate (entropy), with a $O(n \log_2 n)$ complexity and (roughly) a $O(2^{-\sqrt{n}})$ error probability.

Partial generalizations are discussed for non-binary sources in Section \ref{main2}.

In Section \ref{univprior}, a low-complexity deterministic sketching matrix is constructed.
It is shown in Theorem \ref{csprop} that for $k$-sparse signals in $\F_a^n$, $O(k \log_e n/k)$ measurements taken with the proposed sketching matrix are sufficient to recover perfectly the original vector with a probability at least $1-O(2^{-\sqrt{n}})$, and a reconstruction algorithm of complexity $O(a \log_2 a \cdot n \log_2 n)$. An improved version of this Theorem (regarding the dependence in $a$ of the constants) is investigated in Section \ref{imp}.

\section{Duality between source and channel polarization}\label{duality}   
In this section, we connect Theorem \ref{ari} and Theorem \ref{ari2}.
Let $p$ be a distribution on $\F_q$ and consider using Theorem \ref{ari} for an additive noise channel, i.e., $Y=X\oplus Z$ for $Z$ distributed under $p$ and independent of $X$. We then have $Y^n = G_n U^n \oplus Z^n$ and
\begin{align}
&I(U_i; Y^n U^{i-1}) = 1 -H(U_i | Y^n U^{i-1}) \notag \\
& = 1-H((G_nY^n \ominus G_n Z^n)_i | Y^n (G_nY^n \ominus G_n Z^n)^{i-1}) \notag \\
& = 1-H((G_n Z^n)_i |  (G_n Z^n)^{i-1}). \label{drop}
\end{align}
Equality \eqref{drop} uses the fact that $Y^n$ is independent of $Z^n$ because $U^n$, and hence $G_nU^n$, are uniformly distributed over $\F_q$. We also use the fact that $G_n^{-1} = G_n$.
Hence, Theorem \ref{ari} and \eqref{drop} imply Theorem \ref{ari2}.

Stated as such, Theorem \ref{ari2} does not imply Theorem \ref{ari}, since additive noise channels are not representative of all possible channels. 
In \cite{ari3} a slightly more general result than Theorem \ref{ari2} is stated, where an auxiliary random variable $Y$ (side-information), which is a random variable correlated with $X$ but not intended to be compressed, is introduced in the conditioning of each entropy term. This could be used for the reverse implication. 

   

In this paper, we focus mostly on the source setting, since it is the ``simplest'' setting, hence the one to start with.
Using previous expansions, the results obtained in the source setting will directly admit a counter-part in the channel setting, for the case of additive noise channels.

\section{Defining orderings and mathematical preliminaries}
\begin{definition}[Measures]
Let $a$ be a prime integer, $\F_a:=\{0,1,\ldots, a-1\}$ and $\m(a)$ be the set of probability measures on $\F_a$.
For any $k \in \F_a$, let 
\begin{align*}
&\sm_k(a) := \{ p \in \m(a) : p(i)=p(j), \, \forall i,j \neq k, p(k) \geq \frac{a-1}{a} \}
\end{align*}
and $\sm(a) : = \cup_{k \in \F_a} \sm_k(a)$. We refer to the measure in $\sm(a)$ as the the spike measures.
\end{definition}

\begin{definition}[Matrices]
We denote by $\dou(a)$ the set of doubly stochastic matrices of size $a \times a$,
and by $\cir(a)$ the set of circulant stochastic matrices of size $a \times a$.
\end{definition}

\begin{definition}[Orders]
We define
\begin{align}
& p_1 \prec_h p_2 \quad \text{if} \quad h(p_1) \geq h(p_2), \\
& p_1 \prec_d p_2 \quad \text{if} \quad p_1 = D p_2 \text{ for } D \in \dou(a), \\
& p_1 \prec_c p_2 \quad \text{if} \quad p_1 = C p_2 \text{ for } C \in \cir(a).
\end{align}
Note that $\prec_d$ is the majorization order and $p_1 \prec_c p_2$ is equivalent to $p_1 = c \star p_2 $ for $c \in \m(a)$, where $\star$ denotes the circular convolution on $\F_a$. 
\end{definition}
Note that we use the term ``order'' in a broad sense here (not a mathematical order).
\begin{lemma}[Orders hierarchy]\label{hier} 
\begin{align} 
p_1 \prec_c p_2 \quad \Rightarrow \quad  p_1 \prec_d p_2 \quad \Rightarrow \quad  p_1 \prec_h p_2.
\end{align}
\end{lemma}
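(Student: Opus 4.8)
The plan is to prove the two implications separately, starting from the easier one. For $p_1 \prec_c p_2 \Rightarrow p_1 \prec_d p_2$, I would simply observe that every circulant stochastic matrix is doubly stochastic: writing $C \in \cir(a)$ with first row equal to some probability vector $c \in \m(a)$, each row of $C$ is a cyclic shift of $c$, hence has nonnegative entries summing to $1$; but each \emph{column} of $C$ is likewise a cyclic shift of $c$, so it too sums to $1$. Thus $\cir(a) \subseteq \dou(a)$, and $p_1 = Cp_2$ with $C \in \cir(a)$ immediately exhibits $p_1 = Dp_2$ with $D = C \in \dou(a)$, i.e. $p_1 \prec_d p_2$.

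For the second implication $p_1 \prec_d p_2 \Rightarrow p_1 \prec_h p_2$, the key input is the Birkhoff--von Neumann theorem: a doubly stochastic matrix $D$ is a convex combination of permutation matrices, $D = \sum_{k} \lambda_k P_k$ with $\lambda_k \geq 0$ and $\sum_k \lambda_k = 1$. Then $p_1 = Dp_2 = \sum_k \lambda_k (P_k p_2)$. Since a permutation merely reorders the entries of a probability vector, $h(P_k p_2) = h(p_2)$ for every $k$. Invoking concavity of the entropy functional $h$ (Jensen's inequality) gives
\[
h(p_1) = h\Big( \sum_k \lambda_k\, P_k p_2 \Big) \;\geq\; \sum_k \lambda_k\, h(P_k p_2) \;=\; \sum_k \lambda_k\, h(p_2) \;=\; h(p_2),
\]
which is exactly $p_1 \prec_h p_2$. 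Chaining the two implications yields the claim.

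As for difficulty, neither step is a genuine obstacle: the first is a direct structural observation, and the second is the textbook fact that entropy is nondecreasing under doubly stochastic mixing (equivalently, Schur-concave with respect to majorization, consistent with the remark that $\prec_d$ \emph{is} majorization). The only point needing a little care is applying the convex-combination representation of $D$ on the correct side — acting on $p_2$ as a column vector — so that each $P_k p_2$ is a genuine reordering of $p_2$; after that, concavity does all the work. An alternative to Birkhoff would be to quote directly that $Dp_2$ is majorized by $p_2$ for doubly stochastic $D$ and then cite Schur-concavity of $h$, but the Birkhoff route keeps the argument self-contained.
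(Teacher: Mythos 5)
Your proof is correct and follows the same skeleton as the paper's, which disposes of the lemma in one sentence by noting $\cir(a)\subset\dou(a)$ and citing Schur-concavity of entropy. For the first implication you check the row/column-sum property of circulant stochastic matrices explicitly, which is exactly the inclusion the paper invokes. For the second, instead of quoting Schur-concavity you unpack it: Birkhoff--von Neumann writes $D=\sum_k\lambda_k P_k$, permutation-invariance gives $h(P_k p_2)=h(p_2)$, and concavity of $h$ closes the gap. This is precisely the standard proof of the fact the paper cites from Marshall--Olkin, so the two are mathematically equivalent; your version is more self-contained, the paper's is shorter. You even flag the citation route yourself as the alternative. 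One small point worth being aware of (not a gap): the Birkhoff argument shows $h$ is nondecreasing under doubly stochastic maps directly, without needing to pass through the intermediate statement ``$Dp_2$ is majorized by $p_2$,'' so it is in fact the cleaner way to phrase things if one wants to avoid defining majorization at all.
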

\begin{proof}
The first implication follows from the fact that $\cir(a) \subset \dou(a)$ and the second implication follows from the Schur-concavity of the entropy \cite{marshall}. 
\end{proof}
One can easily find examples showing that there is no reverse implications in Lemma \ref{hier}. In this paper, we are interested in the $\prec_c$ order, and previous Lemma gives a first idea on how this order compares to the majorization order. But we will only work with $\prec_c$ in this paper. Also note that the set of measures which are worst than a given $p \in \m(a)$ with respect to $\prec_c$
is given by the convex hull of the orbit of $p$ through cycles, whereas it is given by the convex hull of the orbit of $p$ through permutations when considering $\prec_d$. Note that if $p \in \sm(a)$, these two sets are the same.

\begin{definition}
For $p \in \m(a)$, we define the Fourier transform of $p$ by
\begin{align}
\ft(p)(\omega)= \sum_{k=0}^{a-1} p(k) e^{-2 \pi i k \omega/a}, \quad \omega \in \F_a
\end{align}
and the inverse Fourier transform of $h: \F_a \to \mC$ by
\begin{align}
\ft^{-1}(h)(k)= \frac{1}{a} \sum_{w=0}^{a-1} h(w) e^{2 \pi i k \omega/a}, \quad k \in \F_a.
\end{align}
\end{definition}

\begin{remark}\text{}\\
1. $\ft(p \star q) = \ft(p) \ft(q)$ for any $p,q \in \m(a)$. \\
2. If $p \in \sm_k(a)$ with $p(k)=1-P$, we have that $\ft(p)$ is given by $\ft(p)(0)=1$ and 
\begin{align}
& \ft(p) (\omega) = (1-\frac{aP}{a-1})e^{-2\pi i k \omega/a}, \quad \omega \neq 0.
\end{align}
\noindent
3. From previous remark, note that $(\sm(a), \star)$ is a semi-group.
\end{remark}

\begin{definition}
For $p \in \m(a)$, let $\mathrm{DOM}_c(p)$ be the set of probability measures which dominate $p$ with respect to $\prec_c$, i.e., $\mathrm{DOM}_c(p)= \{q \in \m(a): p \prec_c q\}$. 
\end{definition}

\begin{remark}\label{r2}
Note that it is easier to describe the set of measures that are dominated by a fixed measure $p$ than the reverse. However, we can write $\mathrm{DOM}_c(p)= \{q \in \m(a): \ft^{-1}(\ft(p)/\ft(q)) \geq 0\}$, and we can use the FFT algorithm to compute $\mathrm{DOM}_c(p)$ efficiently. 
\end{remark}
\begin{lemma}
For any $a \geq 1$, 
\begin{align}
p_1,p_2 \in \sm(a), \quad p_1 \prec_h p_2 \quad \Rightarrow \quad p_1 \prec_c p_2.
\end{align}
\end{lemma}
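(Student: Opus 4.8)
The plan is to work entirely on the Fourier side, combining the characterization of $\mathrm{DOM}_c$ in Remark~\ref{r2} with the explicit Fourier transform of spike measures recalled above. First I would parametrize $\sm(a)$: by the Fourier formula for spike measures, a measure $p\in\sm_k(a)$ with $p(k)=1-P$ is determined by the pair $(k,\lambda)$ where $\lambda:=1-\tfrac{aP}{a-1}\in[0,1]$, since then $\ft(p)(0)=1$ and $\ft(p)(\omega)=\lambda\,e^{-2\pi i k\omega/a}$ for $\omega\neq0$.

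Next I would note that on this family the entropy is strictly monotone in the parameter: writing $h(p)=-(1-P)\log_a(1-P)-P\log_a\tfrac{P}{a-1}$ and differentiating gives
\[
\frac{d}{dP}\,h(p)=\log_a\frac{(1-P)(a-1)}{P}\ \ge\ 0 ,
\]
with strict inequality whenever $p$ is not uniform, so $h$ is strictly increasing in $P$, hence strictly decreasing in $\lambda$. Consequently, for $p_1\in\sm_{k_1}(a)$ and $p_2\in\sm_{k_2}(a)$ with parameters $\lambda_1,\lambda_2$, the hypothesis $p_1\prec_h p_2$ (that is, $h(p_1)\ge h(p_2)$) is equivalent to $\lambda_1\le\lambda_2$.

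The core step is then a short computation. By Remark~\ref{r2}, $p_1\prec_c p_2$ holds iff $\ft^{-1}(\ft(p_1)/\ft(p_2))\ge0$. Assume first $\lambda_2>0$. Then $\ft(p_1)(0)/\ft(p_2)(0)=1$ and, for $\omega\neq0$,
\[
\frac{\ft(p_1)(\omega)}{\ft(p_2)(\omega)}=\frac{\lambda_1}{\lambda_2}\,e^{-2\pi i(k_1-k_2)\omega/a}.
\]
By the spike-measure formula this is exactly $\ft(c)$ for the measure $c$ with $c(k_1-k_2)=\tfrac{1+(\lambda_1/\lambda_2)(a-1)}{a}$ and $c(j)=\tfrac{1-\lambda_1/\lambda_2}{a}$ for $j\neq k_1-k_2$, which is nonnegative and sums to $1$ because $\lambda_1/\lambda_2\in[0,1]$; thus $c\in\m(a)$, $c\star p_2=p_1$, and $p_1\prec_c p_2$. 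It remains to dispatch $\lambda_2=0$: then $p_2$ is uniform, so $h(p_2)=1$, the hypothesis forces $h(p_1)=1$, hence $p_1$ is uniform as well and $p_1=c\star p_2$ with $c$ the point mass at $0$.

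I do not anticipate a genuine obstacle here: once the Fourier dictionary for spike measures is available, the statement is essentially read off the transforms. The only points requiring a little care are the monotonicity of $h$ in the spike parameter (the derivative above) and the separation of the degenerate cases---uniform $p_2$, point-mass $p_1$---so that the ratio $\ft(p_1)/\ft(p_2)$ is well defined; each takes only a sentence.
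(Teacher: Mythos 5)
Your proof is correct and follows essentially the same route as the paper's: both parametrize spike measures by the Fourier coefficient $\lambda=1-\tfrac{aP}{a-1}$, write the ratio $\ft(p_1)/\ft(p_2)$ as $(\lambda_1/\lambda_2)\,e^{-2\pi i(k_1-k_2)\omega/a}$, and read off that this is the transform of a nonnegative measure precisely because $\lambda_1/\lambda_2\in[0,1]$, which in turn comes from the monotonicity of $H$ along the spike family. You are somewhat more careful than the paper (you verify the entropy monotonicity by differentiation, you separately dispose of the degenerate case $\lambda_2=0$, and you only claim $c\in\m(a)$ rather than $c\in\sm(a)$, which is all the conclusion $p_1\prec_c p_2$ requires), but the underlying argument is identical.
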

\begin{proof}
Assume that $p_1 \in \sm_k(a)$ and $p_2 \in \sm_l(a)$ for $k,l \in \F_a$, and $p_1 \prec_h p_2$. 
Then, denoting $1-P_1=p_1(k)$ and $1-P_2=p_2(l)$,
\begin{align}
\ft(p_1)(\omega) / \ft(p_2)(\omega) = \frac{1-\frac{aP_1}{a-1}}{1-\frac{aP_2}{a-1}} e^{-2\pi i (k \ominus_a l)/a}. \label{ratio}
\end{align}
Hence, if $(1-\frac{aP_1}{a-1})/(1-\frac{aP_2}{a-1}) \in  \mathrm{Im} (f)$,
where $f: P \in [0,1] \mapsto (1-aP/(a-1))$, we have that \eqref{ratio} is the Fourier transform of an element in $\sm(a)$. 
This is easily verified since $\mathrm{Im} (f)=[0, 1]$ 
and since by assumption $1-P_1 \leq 1-P_2$.
\end{proof}
Since $\sm(2)=\m(2)$, we have the following corollary. 
\begin{corol}\label{order2}
\begin{align}
p_1,p_2 \in \m(2), \quad p_1 \prec_h p_2 \quad \Rightarrow \quad p_1 \prec_c p_2.
\end{align}
\end{corol}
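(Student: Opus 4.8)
The plan is to obtain this statement as the specialization of the preceding Lemma to $a=2$, for which it suffices to observe that on the binary alphabet every probability measure is a spike measure, i.e.\ $\sm(2)=\m(2)$. Once that identity is in place there is nothing left to do: given $p_1,p_2\in\m(2)=\sm(2)$ with $p_1\prec_h p_2$, the Lemma immediately delivers $p_1\prec_c p_2$.

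First I would verify $\sm(2)=\m(2)$ directly from the definition of $\sm_k(a)$. For $a=2$ the set $\F_2=\{0,1\}$ has only two elements, so for a fixed $k\in\F_2$ there is a single index $j\neq k$ and the flatness requirement ``$p(i)=p(j)$ for all $i,j\neq k$'' holds vacuously. Since $(a-1)/a=\tfrac12$ when $a=2$, this yields $\sm_k(2)=\{p\in\m(2):p(k)\geq\tfrac12\}$, hence
\[
\sm(2)=\sm_0(2)\cup\sm_1(2)=\{p:\,p(0)\geq\tfrac12\}\cup\{p:\,p(1)\geq\tfrac12\}=\m(2),
\]
where the last equality holds because $p(0)+p(1)=1$ forces at least one of the two masses to be $\geq\tfrac12$. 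Applying the Lemma to $p_1,p_2\in\sm(2)$ then gives the corollary.

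If instead one wants a self-contained argument for the binary case, it can be carried out by hand along the lines of the Lemma's proof: on $\F_2$ the Fourier transform is determined by $\ft(p)(0)=1$ and $\ft(p)(1)=p(0)-p(1)$, so with $r:=(p_1(0)-p_1(1))/(p_2(0)-p_2(1))$ the measure $c:=\ft^{-1}(\ft(p_1)/\ft(p_2))$ has entries $(1\pm r)/2$, which are nonnegative exactly when $|r|\leq 1$. Since the binary entropy is symmetric about $\tfrac12$ and monotone on each side, $h(p_1)\geq h(p_2)$ is equivalent to $|p_1(0)-p_1(1)|\leq|p_2(0)-p_2(1)|$, i.e.\ $|r|\leq 1$ (the degenerate case $p_2$ uniform is handled separately, where $p_1\prec_h p_2$ forces $p_1$ uniform as well and $c$ uniform works; note that a negative $r$ automatically covers the case where the spikes of $p_1$ and $p_2$ sit at different points). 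Either way, there is no real obstacle here — the whole content is the remark that the binary alphabet makes the spike-measure constraint vacuous, so that the corollary is just the Lemma restricted to $a=2$.
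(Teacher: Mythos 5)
Your proof is correct and takes exactly the same route as the paper, which dispatches the corollary with the single observation that $\sm(2)=\m(2)$ and then invokes the preceding lemma. Your direct verification of that identity (the flatness constraint is vacuous on a two-element alphabet and $p(0)+p(1)=1$ forces one mass to be $\geq\tfrac12$) and the optional hands-on Fourier computation are both sound.
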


We now introduce one more ordering notion.
\begin{definition}
We define for $p_1,p_2 \in M(a)$
\begin{align*}
p_1 \prec_{cp} p_2 \,\,\, \text{iff} \,\,\, & p_1=p_2 \star \nu  \,\,\, \text{where $\nu$ is an infinitely divisible}\\
& \text{probability distribution.}
\end{align*}
\end{definition}
\begin{definition}
A probability distribution $p\in \m(a)$ is infinitely divisible if for any $k\geq 1$, there exists
$p_k \in \m(a)$ such that $$\displaystyle p=\star_{i=1}^k p_k,$$ 
or equivalently, if $\ft(\ft(p)^{1/k}) \geq 0$. 
\end{definition}
Note that checking the infinitely divisibility condition for a large enough $k$ implies the result for smaller $k$'s (by grouping the $p_k$'s). Hence, denoting $\e=1/k$, we need to check that $\ft(p)^\e$ has a valid inverse Fourier transform when $\e$ tends to 0. Let $z=\ft(p)$ and denote the component of $z$ by $z_j=r_j e^{i \theta_j}$. Then 
\begin{align*}
z_j^\e&= r_j^\e e^{i \e \theta_j}= (1+\e \log_e r_j)(1+ i \e \theta_j) + o(\e) \\ &= 1+\e (\log_e r_j + i  \theta_j) +o(\e).
\end{align*}
Hence, by the linearity of $\ft^{-1}$,
$$ \ft^{-1} (z^\e) = (1,0,\dots,0) + \e \ft^{-1}( (\log_e r_j + i  \theta_j)_{j=0}^{a-1}) + o(\e)$$
and to ensure $\ft^{-1} (z^\e) \geq 0$ for any $\e>0$, we need to ensure that
\begin{align}
& y(1),\dots,y(a-1) \geq 0 \label{lacond} \\
& \text{where} \,\,\, y=\ft^{-1} ((\log_e r_j + i  \theta_j)_{j=0}^{a-1}). \notag
\end{align}
Note that the dependency in $k$ has been removed in previous condition. 

To summarize: we have defined a notion of `convolution ordering', with $\prec_c$, where one can reach a distribution from another one with a circular convolution, and a notion of `convolutional path ordering', with $\prec_{cp}$, where one can reach the second distribution with small convolutional steps.
\section{Universality in polarization}\label{laref}
As mentioned in the introduction, there are two parts which require the knowledge of the source distribution in the source polar coding scheme: one in the compression and one in the reconstruction part. We present in this section two lemmas to be used for each of these parts in universal results. We start with the compression part.

\begin{definition}[Polar storage sets]\label{sset}
Let $\delta \in (0,1)$, $n$ a power of 2 and $p \in \m(a)$,
$$\cS_{\delta,n} (p) := \{ i \in [n] : H(U_i | U^{i-1}) \geq \delta \}$$
where $U^n =  X^n G_n$, $G_n=    \bigl[\begin{smallmatrix} 
      1 & 0 \\
      1 & 1 \\
   \end{smallmatrix}\bigr]^{\otimes \log_2(n)}$, $X^n \iid p$.
We use the notation
\begin{align*}
\cS(p_1) \supseteq  \cS(p_2) \,\,\, \text{if} \,\,\, \cS_{\delta,n}(p_1) \supseteq  \cS_{\delta,n}(p_2) \,\,\, \forall \delta \in (0,1), n. 
\end{align*}
\end{definition}

(We will sometimes call the components of $U^n$ on $\cS$ the information bits.)
The reason why we are interested in nested storage sets is clear: if one stores the components of a source distributed under $p_1$, it will also store the information components of any source $p_2$ with $\cS(p_1) \supseteq  \cS(p_2)$ (it will consume more rate than required for compressing a source under $p_2$ specifically, but it will allow lossless compression for both). However, for the reconstruction, it is not clear whereas the nested structure is sufficient to induce a universal decoding process. But let us postpone for now the reconstruction problem and focus on the nested structure only. 

\begin{lemma}\label{sensing}
For any $a \geq 1$, 
\begin{align}
p_1 \prec_c p_2 \quad \Rightarrow \quad \cS(p_2) \subseteq \cS(p_1).
\end{align}
\end{lemma}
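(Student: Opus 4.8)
The plan is to show that if $p_1 = C p_2$ for a circulant stochastic matrix $C$, equivalently $p_1 = c \star p_2$ for some $c \in \m(a)$, then for every $\delta$ and every $n$ a power of $2$ we have $\cS_{\delta,n}(p_2) \subseteq \cS_{\delta,n}(p_1)$; i.e., the index $i$ satisfies $H(U_i^{(1)} \mid U^{(1),i-1}) \geq H(U_i^{(2)} \mid U^{(2),i-1})$, where the superscript indicates which source distribution generates $X^n$. The natural mechanism is a coupling/degradation argument: a source $X^n \iid p_1$ can be realized as $X^n = A^n \oplus B^n$, where $A^n \iid p_2$, $B^n \iid c$, and $A^n \independent B^n$, all coordinatewise. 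Passing through the linear transform $G_n$ (which is additive over $\F_a$), we get $U^n = X^n G_n = A^n G_n \oplus B^n G_n =: V^n \oplus W^n$, with $V^n$ the ``$U$-vector'' of the $p_2$-source and $W^n$ that of the $c$-source, and $V^n \independent W^n$. So at each polarization step the $p_1$-vector is the $p_2$-vector convolved (in $\F_a$) with an independent noise vector that has the same product structure after $G_n$.

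The key step is then a single-letter conditional-entropy monotonicity: for any $i$,
\begin{align}
H(V_i \oplus W_i \mid (V \oplus W)^{i-1}) \;\geq\; H(V_i \mid V^{i-1}),
\end{align}
where $V^{i-1} \independent W^{i-1}$ and $W_i$ is independent of everything relevant given the past in the appropriate sense. I would prove this in two moves. First, conditioning reduces entropy, so $H(V_i \oplus W_i \mid (V \oplus W)^{i-1}) \geq H(V_i \oplus W_i \mid (V\oplus W)^{i-1}, W^{i-1}) = H(V_i \oplus W_i \mid V^{i-1}, W^{i-1})$, using that $(V^{i-1},W^{i-1})$ determines $(V\oplus W)^{i-1}$. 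Second, $W_i$ is independent of $(V^{i-1}, W^{i-1})$? — this is exactly the delicate point — but even without full independence, conditioned on $(V^{i-1},W^{i-1})$ the variable $V_i$ and the variable $W_i$ are conditionally independent (since $V^n \independent W^n$), so $V_i \oplus W_i$ given that past is the $\F_a$-convolution of the conditional law of $V_i$ with the conditional law of $W_i$, and by Schur-concavity of entropy (Lemma \ref{hier}, or directly: convolution with any distribution is a doubly stochastic map, hence entropy-nondecreasing) we get $H(V_i \oplus W_i \mid V^{i-1}=v, W^{i-1}=w) \geq H(V_i \mid V^{i-1}=v)$ for each fixed $(v,w)$. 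Averaging over $(v,w)$ and using $H(V_i \mid V^{i-1}, W^{i-1}) = H(V_i \mid V^{i-1})$ (again by $V^n \independent W^n$) closes the chain.

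I expect the main obstacle to be bookkeeping the independence structure carefully: one must be sure that the coupling $X^n = A^n \oplus B^n$ with independent coordinates really does induce, after the \emph{fixed linear} map $G_n$, the joint law in which $V^n \independent W^n$ and in which, conditioned on the respective pasts $V^{i-1}$ and $W^{i-1}$, the ``current'' symbols $V_i$ and $W_i$ remain conditionally independent — this is immediate because $V^n$ is a deterministic function of $A^n$ and $W^n$ a deterministic function of $B^n$ with $A^n \independent B^n$, so $V^n \independent W^n$ outright, which makes every conditional-independence claim above trivial. Once that is in hand, the only analytic input is the entropy-nondecreasing property of convolution on $\F_a$ (a doubly stochastic transformation), which is standard. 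Since the inequality $H(U_i^{(1)}\mid U^{(1),i-1}) \geq H(U_i^{(2)}\mid U^{(2),i-1})$ holds for all $i$, it holds in particular for all $i$ with $H(U_i^{(2)}\mid U^{(2),i-1}) \geq \delta$, giving $\cS_{\delta,n}(p_2)\subseteq \cS_{\delta,n}(p_1)$ for all $\delta,n$, i.e. $\cS(p_2)\subseteq\cS(p_1)$.
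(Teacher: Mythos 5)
Your proof is correct and rests on the same coupling idea as the paper's: realize the $p_1$-source as $X^n \oplus Z^n$ with $X^n \iid p_2$, $Z^n \iid c$, and $X^n$ independent of $Z^n$, so that after the (common) linear map $G_n$ the transformed vectors satisfy $\widetilde{U}^n = U^n \oplus W^n$ with $U^n$ independent of $W^n$. The only difference is in the conditioning. The paper conditions on the \emph{entire} vector $W^n$, including $W_i$, which makes $\widetilde{U}_i$ a deterministic bijection of $U_i$ given the conditioning; the chain
\[
H(\widetilde{U}_i \mid \widetilde{U}^{i-1}) \geq H(\widetilde{U}_i \mid \widetilde{U}^{i-1}, W^n) = H(U_i \mid U^{i-1}, W^n) = H(U_i \mid U^{i-1})
\]
then closes in three moves with no input beyond ``conditioning reduces entropy'' and the independence of $U^n$ and $W^n$. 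You condition only on $W^{i-1}$, leaving $W_i$ unresolved, which forces one extra ingredient: that convolving with an independent noise on $\F_a$ does not decrease entropy (equivalently, circulant stochastic matrices are doubly stochastic and entropy is Schur-concave, or just $H(V_i \oplus W_i) \geq H(V_i \oplus W_i \mid W_i) = H(V_i)$ for conditionally independent $V_i, W_i$). That ingredient is standard, your handling of the conditional-independence structure (immediate from $V^n$ independent of $W^n$) is right, and both routes yield the per-index inequality for every $i,\delta,n$, hence $\cS(p_2) \subseteq \cS(p_1)$. The paper's choice of conditioning on all of $W^n$ is the marginally slicker move, as it makes the convolution step unnecessary.
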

\begin{proof}
By assumption, there exists $c \in \m(a)$ such that $p_1 = p_2 \star c$. 
Let $X^n \iid p_2$, $Z^n \iid c$ independent of $X^n$ and $\widetilde{X}^n=X^n \oplus Z^n \iid p_1$. 
Define $U^n=G_n X^n$, $\widetilde{U}^n=G_n \widetilde{X}^n$ and $W^n=G_n Z^n$, hence $ \widetilde{U}^n = U^n \oplus W^n$. We have
\begin{align}
H(\widetilde{U}_i | \widetilde{U}^{i-1}) &\geq H(\widetilde{U}_i | \widetilde{U}^{i-1}, W^n) \\
&= H(U_i | U^{i-1}, W^n)\\
&= H(U_i | U^{i-1}) \label{allo}
\end{align}
where the last equality follows from the fact that $U^n$ is independent of $W^n$ since $X^n$ is independent of $Z^n$. 
\end{proof}
Note that the ordering in \eqref{allo} indeed holds for all indices. 
We now investigate the reconstruction problem. We first recall the decoding algorithm used in \cite{ari}

\begin{definition}\label{polar-dec}[\texttt{polar-dec} algorithm \cite{ari,ari3}]\text{}\\
Inputs: $p \in \m(a)$, $n \in \mZ_+$, $\cS \subseteq [n]$ and $u[\cS] \in \F_a^{|\cS|}$.\\
Output: $\texttt{polar-dec}(p,u[\cS],n) \in \F_a^n$.\\
The algorithm proceeds as follows:\\
(0) Initialize $\mathcal{M}=\cS$;\\
(1) Find the smallest integer $i$ in $\mathcal{M}^c$ and compute \\
$u_i=\arg \max_{x \in \F_a} \pp_p\{ U_i=x | u[\mathcal{M}]\}$;\\
(2) Update $\mathcal{M} = \mathcal{M}\cup \{i\}$ and go back to (1) until $\mathcal{M}=[n]$;\\ 
(3) Output $x^n=u^n G_n$ where $u^n=u[\mathcal{M}]$.\\

The term $\pp_p\{ U_i=x | u[\mathcal{M}]\}$ is the probability that $U_i=x$ when $U[\mathcal{M}]$ is observed, where $U^n=X_nG_n$ and $X^n \iid p$. It is shown in \cite{ari,ari3} that the computational cost for each of these probabilities, as well as the overall algorithm, is bounded as $O(n \log_2 n)$ (more precisely $O(a^2 n \log_2 n)$ for the dependence in $a$ and $O(a \log_2(a) \cdot n \log_2 n)$ if \cite{mmac} is used and $a$ is power of 2). We refer to \cite{ari,ari3} for the recursive procedure to compute these probabilities, which uses a ``divide and conquer'' procedure based on the Kronecker structure of $G_n$.  
\end{definition}
\begin{definition}
Let $p_1,p_2 \in \m(a)$, $\delta \in (0,1)$, $n\geq 1$, $X^n \iid p_2$, and $\hat{X}^n=\texttt{polar-dec}(p_1,U[\cS_{\delta,n}(p_1)],n)$, where $U^n=X^n G_n$.
We define 
$$P_e(p_1|p_2) = \pp\{X^n \neq \hat{X}^n\} .$$
\end{definition}

\begin{lemma}\label{cplemma}
For any $a\geq 1$, $\delta \in (0,1/2)$, $n \geq 1$,
\begin{align*}
p_1 \prec_{cp} p_2 \,\,\, \Rightarrow \,\,\,  & P_e(p_1|p_2) \leq P_e(p_2|p_2).
\end{align*}
\end{lemma}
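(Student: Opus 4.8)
The plan is to reduce the claim to a monotonicity statement: walking from $p_2$ to $p_1$ along a convolutional path can only decrease the decoding error probability, because each infinitesimal convolution step injects extra independent noise that the decoder could, in principle, strip off. Recall that $p_1 \prec_{cp} p_2$ means $p_1 = p_2 \star \nu$ with $\nu$ infinitely divisible, so for every integer $m$ we can write $\nu = \nu_m^{\star m}$ with $\nu_m \in \m(a)$, and hence $p_1 = p_2 \star \nu_m^{\star m}$. The first step is to set up the right coupling: let $X^n \iid p_2$, let $Z_1^n, \dots, Z_m^n$ be i.i.d. blocks with each $Z_j^n \iid \nu_m$, all independent of $X^n$, and set $X^{(t),n} = X^n \oplus Z_1^n \oplus \cdots \oplus Z_t^n$, so $X^{(0),n} = X^n \iid p_2$ and $X^{(m),n} \iid p_1$. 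Transform everything by $G_n$: $U^{(t),n} = G_n X^{(t),n}$, $W_j^n = G_n Z_j^n$.

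The second step is to show that one noise step does not hurt: I claim $P_e(p_2 \star \nu_m \,|\, p_2 \star \nu_m) \le P_e(p_2 \,|\, p_2)$, and more generally that the error probability of the \texttt{polar-dec} decoder tuned to $q \star \nu_m$ and run on a $q \star \nu_m$-source is at most that of the decoder tuned to $q$ run on a $q$-source, for any $q \in \m(a)$. The mechanism is the same as in the proof of Lemma \ref{sensing}: the decoder for the noisier source, if additionally handed $W^n$ (the $G_n$-image of the added noise), can subtract it off and simulate the decoder for the cleaner source; since $H(U_i | U^{i-1})$ only goes up under the extra noise (this is exactly \eqref{allo}), the storage set $\cS_{\delta,n}$ can only grow, so the noisier decoder observes \emph{more} coordinates. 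Combining ``more side information'' with ``extra noise is simulable given $W^n$'' gives that, coordinate by coordinate in the successive-cancellation decoding, the MAP estimate of the noisier problem is at least as good as that of the cleaner one; a union bound over the $n$ coordinates then propagates this to the block error probability. Iterating this inequality $m$ times along the chain $p_2, p_2\star\nu_m, p_2\star\nu_m^{\star 2}, \dots, p_1$ yields $P_e(p_1|p_1) \le P_e(p_2|p_2)$.

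The third step is to pass from $P_e(p_1|p_1)$ to $P_e(p_1|p_2)$, i.e.\ to argue that decoding a $p_2$-source with the $p_1$-tuned decoder is no worse than decoding a $p_1$-source with the $p_1$-tuned decoder. Here I would again use the coupling $\widetilde X^n = X^n \oplus Z^n$ with $X^n \iid p_2$, $Z^n \iid \nu$, so $\widetilde X^n \iid p_1$: the \texttt{polar-dec} decoder for $p_1$ fed the true $\cS_{\delta,n}(p_1)$-coordinates of $U^n = G_n X^n$ has less noise than when fed those of $\widetilde U^n = G_n\widetilde X^n$, and since $\cS_{\delta,n}(p_1) \supseteq \cS_{\delta,n}(p_2)$ by Lemma \ref{sensing} the decoder is even storing enough coordinates to pin down the $p_2$-source; a comparison of the successive MAP steps as in step two gives $P_e(p_1|p_2) \le P_e(p_1|p_1)$. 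Chaining with step two gives $P_e(p_1|p_2) \le P_e(p_2|p_2)$, which is the lemma.

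The main obstacle is the rigorous comparison of the successive-cancellation decoders in step two. It is tempting to say ``more side information and simulable noise $\Rightarrow$ smaller error'', but \texttt{polar-dec} is not the genuine block-MAP decoder: it is greedy, committing to $\hat u_i$ before seeing later coordinates, and it conditions on its own past \emph{decisions} rather than on the true past values. So the clean data-processing inequality for MAP does not literally apply. The careful way around this is to bound $P_e(q|q)$ by the sum of per-coordinate ``genie-aided'' bit-error probabilities $\sum_i \pp\{\,U_i \neq \arg\max_x \pp_q(U_i = x \mid U^{i-1})\,\}$ (the standard \cite{ari} analysis, which is exactly why $\delta < 1/2$ is needed — with $\delta<1/2$ a coordinate in $\cS^c$ has $H(U_i|U^{i-1})$ small, so its genie-aided bit-error probability is small, and the block bound is the sum of these), then compare these genie-aided quantities across the coupling using \eqref{allo} and the fact that conditioning on $W^n$ makes the two problems identical up to a deterministic shift. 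That per-coordinate comparison is where the real work lives; everything else is bookkeeping about the coupling and a union bound.
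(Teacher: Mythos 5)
You take a genuinely different route from the paper, and it has a gap that sidesteps the paper's key mechanism. The paper's proof does not pass through the intermediate quantity $P_e(p_1|p_1)$ at all. It compares $P_e(p_1|p_2)$ to $P_e(p_2|p_2)$ \emph{directly}, by showing that on a $p_2$-source the $p_1$-tuned \texttt{polar-dec} decoder makes \emph{exactly the same decisions} as the $p_2$-tuned one, while observing a superset of the stored coordinates. The engine is a \emph{continuity} argument, not a coupling or data-processing argument: the map $p \mapsto \pp_p(W_i = w \mid W^{i-1} = w^{i-1})$ is continuous in $p$, and at a position $i \notin \cS_{\delta,n}(\cdot)$ with $\delta < 1/2$ one of the conditional probabilities is near one (this is precisely where $\delta<1/2$ enters, via \eqref{min}), so a small perturbation $p_2 \to p_2\star 1_\rho$ leaves the argmax unchanged for every conditioning string. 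Infinite divisibility---i.e.\ $\prec_{cp}$ rather than merely $\prec_c$---is used exactly so that the jump from $p_2$ to $p_1$ can be broken into convolution steps small enough (uniformly over the compact path) that this argmax-invariance holds at every step. Once the two decoders output the same thing, the error inequality is immediate because the mismatched decoder stores more and decodes fewer positions. Your proposal never invokes continuity of the conditional probabilities, and never makes the convolution steps small in an essential way; it uses $\nu = \nu_m^{\star m}$ only as a coupling device.

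Two concrete problems. First, your middle inequality $P_e(p_1|p_1) \leq P_e(p_2|p_2)$ is not established and is in fact doubtful. You argue that the noisier source has a larger $\cS$, hence fewer positions to decode, and that the decoder could simulate the cleaner problem if handed $W^n$---but it is not handed $W^n$. Meanwhile, for the positions decoded in \emph{both} cases, the genie-aided per-position error is \emph{larger} for the noisier source: $H(\widetilde U_i \mid \widetilde U^{i-1}) \geq H(U_i \mid U^{i-1})$ as in \eqref{allo}, and the Bhattacharyya parameter likewise increases under the convolution-induced degradation. Fewer terms, each larger, gives no clean comparison, and ``more side information plus simulable noise'' does not resolve it. Second, your step 3, $P_e(p_1|p_2) \leq P_e(p_1|p_1)$, is precisely where the mismatch lives---the decoder evaluates $\pp_{p_1}$ probabilities while the source is $p_2$---and is essentially as hard as the original lemma; the one-line appeal to ``a comparison of the successive MAP steps as in step two'' does not address it. What is missing is the continuity-plus-small-steps argument that ties the $\pp_{p_1}$-probabilities the decoder uses to the $\pp_{p_2}$-probabilities governing the source.
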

\begin{proof}
Fix $n$ and $\delta < 1/2$. If $p_1$ is the uniform distribution, $S_{\delta,n}(p_1)=\{1,\dots,n\}$ and the claim is clear: since we store all components, the left-hand side error probability is 0. 
Hence, assume that $p_1$ is not the uniform distribution.

Let us assume that $a=2$, the proof for $a>2$ is similar. 
Let $p_2 \in \m(2)$ and
$q = p_2 \star  1_\rho$, where $1_\rho= [1-\rho,\rho]$.
Since $q \prec_c p_2$, we have $\cS_{\delta,n} (q) \supseteq \cS_{\delta,n} (p_2)$ and for the components $i$ to be decoded
$$\delta > H(U_i | U^{i-1}) \geq  H(V_i | V^{i-1}),$$
where the $U_i$'s (resp. $V_i$'s) are i.i.d. under $q$ (resp.\ $p_2$).
For $W_1,\dots,W_n$ i.i.d. under $p$ and $w,w_1,\dots,w_{i-1} \in \{0,1\}$, define the mapping
\begin{align}
f_{w|w^{i-1}}: p \mapsto P(W_i = w| W^{i-1}= w^{i-1}). \label{cont}
\end{align}
Note that $f_{w|w^{i-1}}$ is continuous over $\m(2)$ (with the topology induced by $\mR^2$) for any $w,w_1,\dots,w_{i-1} \in \{0,1\}$. 

Also note that $H(U_i | U^{i-1}) < \delta$ implies that there exist $\xi(\delta)$ with $\xi(\delta) \stackrel{\delta \to 0}{\to} 0$, such that for any $u^{i-1} \in \{0,1\}^n$, 
\begin{align}
P(U_i = 0 | U^{i-1}= u^{i-1}) \wedge  P(U_i = 1 | U^{i-1}= u^{i-1}) < \xi(\e). \label{min}
\end{align}
Hence, using \eqref{min} and the continuity of $f_{w|w^{i-1}}$, we have for $\rho$ small enough and any $i$ in the complement of $\cS_{\delta,n} (q)$,  
\begin{align}
& \arg\max_{u \in \{0,1\}} P(U_i = u| U^{i-1}= u^{i-1}) \\ &= \arg\max_{u \in \{0,1\}} P(V_i = u| V^{i-1}= u^{i-1}) . \label{game}
\end{align}

For $p_1 \prec_{cp} p_2$, we have that  $p_1 = p_2 \star_{i=1}^k  1_\delta $ for any $k$ where $\delta$ depends on $k$ and is decreasing with $k$ increasing. We now want to iterate previous argument, but we have to use the continuity of \eqref{cont} at different distribution $q$'s, namely $q= p_2 \star_{i=1}^l  1_\delta$ for $l=1,\dots,k$. Since this is a compact set of $q$'s (the entire path from $p_2$ to $p_1$), we can pick $k$ large enough such that the continuity argument remains effective along the entire path, and \eqref{game} is proved by \eqref{min}. It is important to assume that $p_1$ is not uniform, so as to keep $\rho$ bounded below from 0. 

Finally, from \eqref{game}, we have that the algorithm \texttt{polar-dec} used with the mismatched distribution still leads to the same
output than when used with the matched one. Since in the mismatched scenario we observe more components than needed, strictly speaking we have an inequality in the error probability as in the lemma's statement.  

For $a \geq 3$, the proof is identical, except that we are moving along $p_2 \star_{i=1}^k \nu$ where $\nu$ is close to a delta function over $\F_a$, and \eqref{min}, resp. \eqref{game}, holds when the minimum, resp. maximum, include all elements of $\F_a$.
\end{proof}

This result tells us that, if we do the compression and the reconstruction using the distribution $p_1$, we can compress and reconstruct losslessly any source distribution which are better than $p_1$ in terms of $\prec_{cp}$.

If one were to use a compression scheme ignoring any complexity considerations, then, simply by knowing that the source distribution has an entropy at most $R$, it would be possible to compress and reconstruct the source losslessly, using the method of types for example, at rate $R$. And the set $\{p\in \m(a): h(p) \leq R \}$ are essentially the largest sets which can be compressed losslessly at a fixed rate.
It is ambitious to ask for such a ``broad universality'' with polar codes, since these are structured codes with complexity attributes, in contrast to the codes derived with the method of types. We may have to give up some extra rate to achieve this goal, or we may universally compress only certain subsets of source distributions. We now investigate these points.  

\subsection{Results for binary sources}\label{main1}
For binary sources, it is possible to achieve a broad universal result with polar coding. 

Notation: For a given $0\leq R<1$, let $p_0(R), p_1(R)$ be the two binary probability distributions such that $H(p_0(R))=H(p_1(R))=R$. 

\begin{definition}[Universal polar compression algorithm]\text{}

A. Compression:\\
Inputs: $R \in [0,1]$ (the rate of compression), $\delta$ (the target error probability), $x \in \F_2^n$ (the data).\\
Output: $v \in \F_2^{nR+o(n)}$ (the stored data).\\
The compression algorithm proceeds as follows:\\ 
1. Compute $u= G_n x$\\
2. Store $u[\cS_{\delta,n}(p_0(R))]$ and $u[n]$

B. Reconstruction:\\
Inputs: $n$, $R$, $u[\cS_{\delta,n}(p_0(R))]$ and $u[n]$\\
Outputs: $\texttt{polar-dec-adapt}(p_0(R),p_1(R),u[\cS_{\delta,n}(p_0(R))],$\\$u[n],n)$

\end{definition}

\begin{definition}[\texttt{polar-dec-adapt} algorithm]\text{}\\
Inputs: $p_1,\ldots,p_k \in \m(a)$, $n \in \mZ_+$, $\cS \subseteq [n]$, $\cT \subseteq \cS^c$ (called the set of checkers) and $u[\cT \cup \cS] \in \F_a^{|\cT|+|\cS|}$.\\
Output: $\texttt{polar-dec-adapt}(p_1,\dots,p_k, u[\cS], u[\cT],n) \in \F_a^n$.\\
The algorithm proceeds as follows:\\
(1) For $j=1,\dots,k$, run $u_{(j)}^n =\texttt{polar-dec}(p_j,u[\cS],n)$\\
(2) Find $t=\arg\min_{j=1,\dots, k} d_H(u_{(j)}[\cT] , u[\cT])$ (pick one at random for ties)\\
(3) Output $\texttt{polar-dec}(p_t,u[\cS],n)$. (Variant: output $\texttt{polar-dec}(p_t,u[\cS\cup \cT],n)$.)
\end{definition}

\begin{thm}\label{mainthm}[Universal polar compression]
Let $X^n=[X_1,\dots,X_n]$ be i.i.d.\ Bernoulli with $H(X_1) \leq R$.
The universal polar compression algorithm allows to compress $X^n$ at rate R, with error probability $O(2^{-n^{\beta}})$, for any $\beta<1/2$, and compression/reconstruction complexity $O(n \log_2 n)$.
\end{thm}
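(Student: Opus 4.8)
The plan is to verify the three claimed properties of the universal polar compression algorithm (rate, error probability, complexity) by reducing everything to known polar-coding facts plus the two universality lemmas (Lemma~\ref{sensing} and Lemma~\ref{cplemma}) and Corollary~\ref{order2}. First I would fix the data $X^n$ i.i.d.\ Bernoulli with $H(X_1)\leq R$, and let $p$ denote its distribution; since $a=2$, either $p\prec_h p_0(R)$ or $p\prec_h p_1(R)$ according to whether $p(1)\leq p_0(R)(1)$ or not, because both $p_0(R)$ and $p_1(R)$ have entropy exactly $R\geq H(p)$. By Corollary~\ref{order2}, $p\prec_h p_j(R)$ implies $p\prec_c p_j(R)$, and then by Lemma~\ref{sensing} we get $\cS_{\delta,n}(p)\subseteq\cS_{\delta,n}(p_j(R))$. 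The compression step stores $u[\cS_{\delta,n}(p_0(R))]$ together with $u$ on the checker set (here $u[n]$ denotes the restriction to a small checker set $\cT$, which I will argue can be taken of size $o(n)$, e.g.\ $n^{\gamma}$ for some $\gamma<1$, or even $O(\log n)$, as long as it suffices to distinguish $p_0(R)$ from $p_1(R)$ with high probability).

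Next I would handle the rate. By Theorem~\ref{ari2}, $\frac1n|\cS_{\delta,n}(p_0(R))|\to H(p_0(R))=R$, so the number of stored components is $nR+o(n)$; adding the checker set contributes only $o(n)$ more, giving overall rate $R$. For the error probability: the reconstruction runs \texttt{polar-dec} with both $p_0(R)$ and $p_1(R)$ on the observed set $\cS_{\delta,n}(p_0(R))$. The key point is that the true distribution $p$ satisfies $p\prec_{cp}p_j(R)$ for the appropriate $j$ — this needs justification: for binary $p$ and $p_j(R)$ with $H(p)\le H(p_j(R))$ and $p,p_j(R)$ on the same ``side,'' one has $p = p_j(R)\star 1_\rho$ for a suitable $\rho$, and $1_\rho$ is infinitely divisible on $\F_2$ (its Fourier transform is $(1,1-2\rho)$, whose $\e$-th power has nonnegative inverse transform for $\rho\le 1/2$), so indeed $p\prec_{cp}p_j(R)$. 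Then Lemma~\ref{cplemma} gives $P_e(p_j(R)\,|\,p)\le P_e(p_j(R)\,|\,p_j(R))$, and the latter is the matched polar-code error probability, which by the scaling result of \cite{ari2} (and its source-coding analogue) is $O(2^{-n^\beta})$ for every $\beta<1/2$. It remains to bound the probability that step~(2) of \texttt{polar-dec-adapt} picks the wrong $p_t$: since $p_0(R)$ and $p_1(R)$ are distinct distributions, the empirical distribution of $u$ on a checker set $\cT$ of size growing with $n$ concentrates (Hoeffding/Sanov) around a value distinguishing the two candidates, provided the checker components carry enough information about $p$; I would argue the checkers can be chosen among the low-entropy indices of $\cS_{\delta,n}(p_0(R))^c$ where the conditional distribution differs between $p_0(R)$ and $p_1(R)$, making the mis-selection probability exponentially small in $|\cT|$, hence $O(2^{-n^\beta})$ if $|\cT|$ grows like $n^{\beta'}$ with $\beta<\beta'<1/2$. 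A union bound over the two candidate decodings and the selection step then yields the stated $O(2^{-n^\beta})$.

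Finally, complexity: \texttt{polar-dec} runs in $O(n\log_2 n)$ (with $a=2$ the $a$-dependence is an absolute constant), \texttt{polar-dec-adapt} calls it $k=2$ times, computing $G_n x$ is $O(n\log_2 n)$, and the Hamming-distance comparison on $\cT$ is $O(|\cT|)=o(n)$; so both compression and reconstruction are $O(n\log_2 n)$.

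The main obstacle I expect is making the checker argument fully rigorous: one must show simultaneously that (i) the checker set $\cT$ is small enough not to spoil the rate, (ii) large enough and chosen at indices where $p_0(R)$- and $p_1(R)$-statistics genuinely separate, so that $d_H(u_{(j)}[\cT],u[\cT])$ reliably identifies the correct candidate with failure probability $O(2^{-n^\beta})$, and (iii) that conditioning on the checker observations does not interfere with the \texttt{polar-dec} error analysis (this is why the variant in step~(3) re-runs \texttt{polar-dec} on $\cS$ alone, or on $\cS\cup\cT$ with only an inequality in the error probability, exactly as in Lemma~\ref{cplemma}). Everything else is an assembly of Theorem~\ref{ari2}, Corollary~\ref{order2}, Lemmas~\ref{sensing}--\ref{cplemma}, and the $2^{-\sqrt n}$ scaling from \cite{ari2}.
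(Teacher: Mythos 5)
You have the overall architecture right (Corollary~\ref{order2} $\to$ Lemma~\ref{sensing} for the storage set, Lemma~\ref{cplemma} for the matched-on-one-side decoding, $2^{-\sqrt n}$ scaling from \cite{ari2}), but there are two genuine gaps.

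First, the direction of $\prec_{cp}$ is reversed. You write that the true source $p$ satisfies $p\prec_{cp}p_j(R)$ and that $p=p_j(R)\star 1_\rho$; but convolution can only increase entropy, so $p=p_j(R)\star 1_\rho$ with $H(p)\le H(p_j(R))=R$ forces $\rho=0$. The correct relation (and the one the paper uses) is $p_j(R)\prec_{cp}p$, i.e.\ $p_j(R)=p\star 1_\rho$: the scheme is designed for the worst (highest-entropy) distribution $p_j(R)$ in its half $D_j(R)$, and Lemma~\ref{cplemma} with $p_1=p_j(R)$, $p_2=p$ then gives $P_e(p_j(R)\,|\,p)\le P_e(p\,|\,p)$, not the bound you stated.

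Second, and more importantly, you treated $u[n]$ as a checker \emph{set} of size $o(n)$ requiring a Hoeffding/Sanov concentration argument. The paper's $u[n]$ is the single coordinate $u_n$, and a single bit suffices for a structural, not statistical, reason: $1^n G_n=[0^{n-1},1]$, so if $X^n$ is typical for $p_*\in D_0(R)$ then $X^n\oplus 1^n$ is typical for $\tilde p_*\in D_1(R)$ and both strings give identical values on $\cS_{\delta,n}(p_0(R))$; decoding with the wrong side recovers $X^n\oplus 1^n$, whose $G_n$-transform differs from $U^n$ in \emph{exactly} the last bit $u_n$. Thus storing $u_n$ resolves the ambiguity with no concentration argument. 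This also shows your proposal as written would actually \emph{fail} if the checker set avoids the index $n$: when both \texttt{polar-dec} calls succeed (one yielding $X^n$, the other $X^n\oplus 1^n$), the two decoded $u$-vectors agree everywhere except at $n$, so the Hamming-distance comparison on any $\cT\not\ni n$ is a tie. The "separation of conditional statistics at frozen indices" you appeal to never materializes.
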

(Note: Using the duality argument of Section \ref{duality}, this theorem admits an analogue for universal coding over binary symmetric channels.)

For the proof of this theorem, we show that:\\
1. Any binary source which is known to have entropy at most $R$ can be compressed universally with polar codes by storing the information bits on $\cS(p^*)$, where $p^*$ is one of the two distributions with entropy $R$.\\
2. If it is known on which symbol the source distribution puts more mass, the source can also be losslessly reconstructed with \verb+polar-dec+ using a checker.\\
3. If it is not known on which symbol the source distribution puts more mass, the source can also be losslessly reconstructed using the modified decoding algorithm \verb+polar-dec-adapt+.


\begin{proof}[Proof of Theorem \ref{mainthm}]
Let $D(R) \subseteq \m(2)$ be the set of binary distributions with entropy at most $R$, and as before, denote by $p_0(R)$ and $p_1(R)$ the 
two distributions of entropy equal to $R$ (assume $R <1$, the result is otherwise trivial).  
Note that, by Corollary \ref{order2}, for $i=0,1$  
$$p_i(R ) \prec_c D(R), $$
and by Lemma \ref{sensing}
$$\cS_{\delta,n}(p_i(R)) \supseteq  \cS_{\delta,n}(p),\quad \forall p 
\in D(r), \delta,n.$$
Hence, by storing the components on $\cS_{\delta,n}(p_0(R))$, we are not loosing any information bits. 
We have to set $\delta = \delta_n = 2^{-n^{\alpha}}$ with $\alpha<1/2$ large enough to reach the desired $\beta$ in the Theorem.

Let $D_i(R)$, $i=0,1$, be the two regions of $D(R)$ containing distributions putting more mass on 0, respectively 1, assuming consistent indexing with $p_0(R)$ and $p_1(R)$. 
Note that for $i=0,1$
$$p_i(R) \prec_{cp} D_i(R).$$
Hence, if we know that the source distribution belongs to $D_0(R)$, 
we can conclude from Lemma \ref{cplemma} that $\texttt{polar-dec}(p^*,u[\cS_{\delta,n}(p^*)],n)$ leads to an exact recovery,
with error probability at most equal to the error probability of the source polar scheme designed with perfect knowledge of the source distribution, which is from \cite{ari3}, $O(2^{-n^{\beta}})$, for any $\beta<1/2$. From the same paper, we conclude that the compression and reconstruction complexity $O(n \log_2 n)$.

If we do not know whether the true distribution is in $D_0(R)$ or $D_1(R)$, we can learn it as follows. 
Assume that the type of $X^n$ is close to its Bernoulli distribution; this is not the case with an exponentially small probability.  
Notice that the observed data $U[\cS_{\delta,n}(p_0(R))]$ corresponds to an equally likely string under both a distribution in $D_0(R)$ and $D_1(R)$, since
the distribution on $\cS_{\delta,n}(p_0(R))$ when $X^n$ is drawn under $p_0(R)$ or $p_1(R)$ is uniform.  
Say w.l.o.g.\ that the true distribution, $p_*$, is in $D_0(R)$. If we use $p_0(R)$ for $\texttt{polar-dec}$, we will get the right output (modulo the error probability).
If we use $p_1(R)$, we will recover $\tilde{X}^n$ which is typical under $\tilde{p}_*$, the measure obtained by $p_*$ by exchanging the probability mass at 0 and 1.
To see this, note that for a typical $X^n$ under $p_*$, $X^n + 1^n$ is typical under $\tilde{p}_*$ (where $1^n$ is the $n$-dimensional vector filled with 1's). Moreover, $$ 1^n G_n = [0^{n-1}, 1],$$ and the last component of $U^n$ cannot be in $\cS_{\delta,n}(p)$ (the last component of $U^n$ is the one with least conditional entropy) unless $p$ is the uniform distribution. Hence $X^n$ and $X^n + 1^n$ are both typical upon observing $U[\cS_{\delta,n}(p_0(R))]$, and we must have been able to recover correctly $X^n$ or $X^n + 1^n$ when knowing if the true distribution was in $D_0(R)$ or $D_1(R)$ and decoding with respectively $p_0(R)$ or $p_1(R)$. 
Hence, by storing the value of $U[n]$ (even if it has low entropy) and running the algorithm $\texttt{polar-dec-adapt}$ with both $p_0(R)$ and $p_1(R)$, we can check which one of the two models provides the correct estimate for $U[n]$ and learn whether $p_*$ is in $D_0(R)$ or $D_1(R)$. Indeed, there is no need to run twice the algorithm, it is sufficient to run it once and use the value of $U[n]$ which had been stored. In any case, we will make en error, if $\texttt{polar-dec}$ fails, which happens from \cite{ari3} with probability $O(2^{-n^{\beta}})$, for any $\beta < 1/2$, and the complexity of this scheme is $O(n \log_2 n)$. 
\end{proof}

\subsection{Results for $a$-ary sources}\label{main2}
\begin{definition}
For $D \subset \m(a)$, let
\begin{align}
 p_c(D)& := \arg \min_{p \in \m(a): p \prec_c D} H(p), \label{domp}\\
 \hat{p}_c(D)&  := \arg \min_{p\in \sm(a): p \prec_c D} H(p). \label{domp2}
\end{align}
In any of the above minimization, if the minimizer is not unique, pick one arbitrarily. 
\end{definition}
We clearly have that $\hat{p}_c(D) \prec_h p_c(D)$; however, it is trivial to find $\hat{p}_c(D)$ while finding $p_c(D)$ is more difficult. 

Let us assume for instance that the exact source distribution is unknown for the compression part, but is known for the reconstruction part. If the source distribution is known to belong to a set $D \subset \m(a)$, one way to construct the storage set is to pick $\cS(p_c(D))$. Then, from Lemma \ref{sensing} and Corollary \ref{order2}, this retains the information bits of any source in $D$. Of course, this may consume more rate than needed with an optimal source code, in other words, if we define $H_{\text{max}} (D) : = \max_{p \in D} H(p),$
we have in general
\begin{align}
H(p_c(D)) \geq H_{\text{max}} (D). \label{ineq}
\end{align}
The inequality can be strict since $p_1 \prec_h p_2$ does not imply in general $p_1 \prec_c p_2$, and there are examples where equality holds in the above, in which case a compression designed for $p_c(D)$ requires the minimal rate to compress any source in $D$. 
\begin{remark}\label{domino}
Let $D \subset \m(a)$ be such that $\arg \max_{p \in D} H(p)$ is unique (denoted $p_h(D)$) and satisfies $p_h(D) \prec_c D$. Then a source polar code designed for $p_h(D)$ can compress any source in $D$ at the lowest achievable rate $\max_{p \in D} H(p)$ without loosing any information bits. 
\end{remark}
(Note that $p_h(D) \prec_c D$ implies that $p_c(D)=p_h(D)$.) 
The set $\mathrm{DOM}_c(p)$, plotted in Figure \ref{shape}, satisfies (by definition) the condition of Remark \ref{domino} for any $p$. Comparing Figure \ref{shape} with the plot of $\mathrm{DOM}_h(p):=\{q \in m(a): q \prec_h p\}$ also shows that there are sets for which \eqref{ineq} holds with a strict inequality. 
One can also check how much rate is lost by compressing for a distribution that is dominated in terms of $\prec_c$ as opposed to $\prec_h$, for example 
the gap between the rate needed to compress $B_R:=\{p\in \m(a): H(p) \leq R \}$ using $p_c(B_R)$ and the minimal rate $R$ needed with the method of types, i.e., $H(p_c (B_R)) - R$.
This gap can be computed using Remark \ref{r2}, in the case of Figure \ref{shape} for example, it is 0.095 for R=0.865 and $a=3$.
\begin{figure}
\begin{center}
\includegraphics[scale=.3]{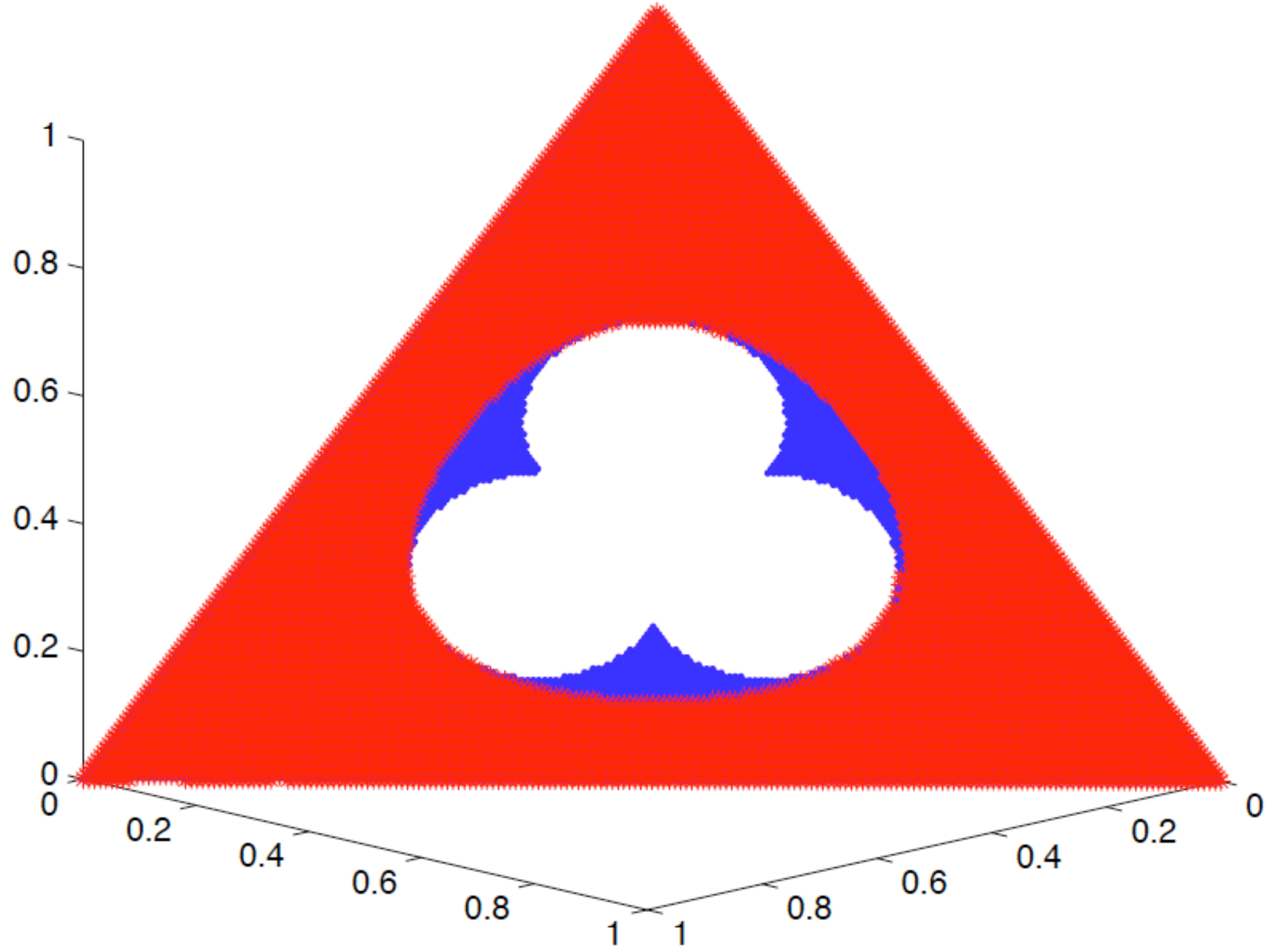}
\caption{Plots of $\mathrm{DOM}_h([0.2,0.2,0.6])$ (red region) included in $\mathrm{DOM}_c([0.2,0.4,0.4])$ (blue region).}
\label{shape}
\end{center}
\end{figure}

Also note that $p_1 \nprec_c p_2$ may not imply $\cS(p_2) \nsubseteq \cS(p_1)$, and there may be other ways to construct storage sets which contain the information bits for several distributions (than using $\prec_c$). We investigate this point in Section \ref{imp} and now move to the decoding part for $a$-ary sources.

\begin{definition}
For $D \subset \m(a)$, let
\begin{align}
 p_{cp}(D) := \arg \min_{p \in \m(a): p \prec_{cp} D} H(p). \label{pcp}
\end{align}
If the minimizer is not unique, pick one arbitrarily. 
\end{definition}

The following follows by definitions. 
\begin{lemma}
A source distribution known to belong to a set $D \subset \m(a)$ can be compressed and reconstructed losslessly at rate $H(p_{cp}(D))$, using a polar code designed for the distribution $p_{cp}(D)$.
\end{lemma}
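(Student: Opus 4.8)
The plan is to assemble this statement directly from the two universality lemmas already established, Lemma \ref{sensing} (which handles the compression side via $\prec_c$) and Lemma \ref{cplemma} (which handles the reconstruction side via $\prec_{cp}$), together with the definitions of $p_{cp}(D)$ and the algorithm \texttt{polar-dec}. First I would unwind what it means to ``compress and reconstruct losslessly at rate $H(p_{cp}(D))$ using a polar code designed for $p_{cp}(D)$'': the compressor stores the components $u[\cS_{\delta,n}(p_{cp}(D))]$ of $u^n = x^n G_n$, and the reconstructor runs $\texttt{polar-dec}(p_{cp}(D), u[\cS_{\delta,n}(p_{cp}(D))], n)$.

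Next I would check the compression claim. Let $p_{cp}(D)$ be as in \eqref{pcp}; by definition it satisfies $p_{cp}(D) \prec_{cp} q$ for every $q \in D$, and by the order hierarchy $\prec_{cp}$ implies $\prec_c$ (since a single convolution by an infinitely divisible $\nu$ is in particular a circular convolution, so $p_{cp}(D) \prec_c q$). Then Lemma \ref{sensing} gives $\cS(q) \subseteq \cS(p_{cp}(D))$ for all $q \in D$, so storing the components on $\cS_{\delta,n}(p_{cp}(D))$ retains all the information bits of the true source, hence is lossless at the compression stage. The rate is $|\cS_{\delta,n}(p_{cp}(D))|/n \to H(p_{cp}(D))$ as $n \to \infty$ (with $\delta \to 0$), by Theorem \ref{ari2} applied to $p_{cp}(D)$.

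Then I would check the reconstruction claim. For each $q \in D$ we have $p_{cp}(D) \prec_{cp} q$ by the definition of $p_{cp}(D)$, so Lemma \ref{cplemma} applies with $p_1 = p_{cp}(D)$ and $p_2 = q$: it yields $P_e(p_{cp}(D) \mid q) \le P_e(q \mid q)$. The right-hand side is the error probability of the polar source code used with perfect knowledge of the true distribution $q$, which is vanishing (indeed $O(2^{-\sqrt n})$ from \cite{ari3}). Hence running \texttt{polar-dec} with the mismatched design distribution $p_{cp}(D)$ still recovers $x^n$ with vanishing error probability, uniformly over $q \in D$. Combining the two halves gives lossless compression at rate $H(p_{cp}(D))$ for every source in $D$.

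The main obstacle, such as it is, is not a deep one but a bookkeeping point: making sure the single $\delta$–$n$ regime can be chosen uniformly in $q \in D$. Lemma \ref{cplemma} is stated for fixed $n$ and fixed $\delta < 1/2$, but its proof uses a compactness argument along the convolutional path that may a priori depend on $q$; if $D$ is not compact (or $p_{cp}(D)$ is uniform, the degenerate case noted in Lemma \ref{cplemma}) one should be slightly careful. I expect the clean way to phrase it is to note the lemma already delivers the pointwise inequality $P_e(p_{cp}(D)\mid q)\le P_e(q\mid q)$ for each $q$, and since the statement here is an asymptotic-rate / vanishing-error statement (not a uniform-in-$q$ rate), this pointwise version suffices. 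So the proof really is just ``apply Lemma \ref{sensing}, then Lemma \ref{cplemma}, then Theorem \ref{ari2},'' and the write-up is a couple of lines.
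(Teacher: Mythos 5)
Your proposal is correct and takes the same route the paper intends: the paper merely says the lemma ``follows by definitions,'' meaning exactly the combination you spell out---apply Lemma~\ref{sensing} (via the observation that $\prec_{cp}$ implies $\prec_c$) for the storage-set inclusion and rate, and Lemma~\ref{cplemma} pointwise for the reconstruction error bound. Your bookkeeping worry about uniformity in $q$ is moot for the reason you yourself note (the claim is per source, not uniform over $D$), and the uniform-$p_{cp}(D)$ degeneracy is already handled trivially inside the proof of Lemma~\ref{cplemma}.
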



\subsection{Non-universality of $a$-ary source polar codes}\label{neg}

In this section, we show that in general, polar codes cannot achieve the lowest rate for lossless compression of compound sources when $a\geq 3$, no matter how the storages sets are constructed (i.e., not necessarily via $\prec_c$).
A similar result has been derived in \cite{rudi1} for channel coding, however, it is not possible to leverage the counter-example found in \cite{rudi1} to the source case (since the channel polarization results have a source counter-part only for additive noise channels, and the counter-example in \cite{rudi1} does not use only with additive noise channels). In this section, we assume that for the decoding part, we have the aid of a genie that provides the exact source distribution. 

We consider two source distributions $p$ and $q$ on $\F_a$, and we are interested in finding the rates at which one can compress these two sources without loosing the information bits of any of them. 
We denote by $C_{\textrm{pol}}(p,q)$ the infimum of these rates, and we provide different bounds on this quantity.
Clearly
$$C(p,q):=H(p) \vee H(q) \leq C_{\textrm{pol}}(p,q).$$ 
From previous section, we have the upper bound $C_{\textrm{pol}}(p,q) \leq  H(p_c(p,q)),$
where $p_c(p,q)$ is as defined in \eqref{domp} for the set $D=\{p,q\}$.
In our definition, $C_{\textrm{pol}}(p,q)$ is given by the limit inferior of 
$$ \frac{1}{n} | \cS_\delta (p) \cup \cS_\delta (q) |.$$ 
Let $n=2^\ell$, $U^n=G_nX^n$ where $X^n$ is i.i.d. under $p$, and $V^n=G_nY^n$ where $Y^n$ is i.i.d. under $q$. Let us also denote by $P$ (resp. $Q$) the additive noise channel whose noise distribution is $p$ (resp. $q$). We then have from Section \ref{duality}
\begin{align}
H(U_i|U^{i-1}) = 1-I(P_i), \quad H(V_i|V^{i-1}) = 1-I(Q_i)
\end{align}
where $P_i$ (resp. $Q_i$) are the channels corresponding to $P^{\sigma}$ for $\sigma \in \{-,+\}^\ell$, as defined in \cite{ari} with the tree construction. 
 Moreover, if we define for $\delta \in (0,1)$
$\mathcal{G}_\delta (P) = \{ i \in \{1,\dots,n\} : I(P_i) \geq \delta \}$, 
we have
\begin{align}
\cS_\delta (p) \cup \cS_\delta (q) = \left( \mathcal{G}_\delta (P) \cap \mathcal{G}_\delta (Q) \right)^c. \label{corresp}
\end{align}
This shows that the compound capacity for source or channel coding are related and we can use the result of Section \ref{duality} and Theorem 5 in \cite{rudi1} to get the following bounds.
\begin{lemma}\label{rud}
\begin{align}
C_{\textrm{pol}}(p,q)  &\leq \frac{1}{2^\ell}  \sum_{\sigma \in \{-,+\}^\ell}  I(\text{BEC}(Z(P^\sigma)) \vee I(Z(Q^\sigma)))\\
C_{\textrm{pol}}(p,q) & \geq \frac{1}{2^\ell}  \sum_{\sigma \in \{-,+\}^\ell}  H(p^\sigma) \vee H(q^\sigma)   
\end{align} 
where $P$ (resp. $Q$) is the additive noise channel with noise distribution $p$ (resp. $q$).
Moreover each bound is monotonically approaching $C_{\textrm{pol}}(p,q)$.
\end{lemma}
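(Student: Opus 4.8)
## Proof proposal for Lemma \ref{rud}

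The plan is to reduce the statement entirely to known facts about channel polarization by exploiting the source/channel correspondence established in Section \ref{duality}. The key identity to set up first is \eqref{corresp}, which says that the union of storage sets $\cS_\delta(p)\cup\cS_\delta(q)$ is precisely the complement of $\mathcal{G}_\delta(P)\cap\mathcal{G}_\delta(Q)$, the set of indices where \emph{both} the polarized channels $P_i$ and $Q_i$ are good. Taking the $\liminf$ of $\frac{1}{n}|\cS_\delta(p)\cup\cS_\delta(q)|$ as $n\to\infty$ and then $\delta\to 0$, this normalized cardinality becomes $1$ minus the fraction of indices where both polarized channels have symmetric capacity tending to $1$. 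This is exactly the quantity bounded in Theorem 5 of \cite{rudi1} for the compound channel $\{P,Q\}$: the compound polar capacity $C_{\mathrm{pol}}^{\mathrm{chan}}(P,Q)$ equals the asymptotic fraction of jointly good indices, so $C_{\mathrm{pol}}(p,q) = 1 - C_{\mathrm{pol}}^{\mathrm{chan}}(P,Q)$.

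Next I would transcribe the two bounds from \cite{rudi1} through this relation. The upper bound on $C_{\mathrm{pol}}^{\mathrm{chan}}$ in \cite{rudi1} is obtained by bracketing each polarized channel $P^\sigma$ between a BEC with the same Bhattacharyya parameter $Z(P^\sigma)$ (which is a degraded version) and noting that at finite depth $\ell$ the fraction of jointly good channels is controlled by $\frac{1}{2^\ell}\sum_\sigma$ of an $I(\cdot)\wedge I(\cdot)$-type expression; complementing turns the $\wedge$ into a $\vee$ and the capacity terms into their $1$-complements, which for a BEC with erasure probability $Z$ gives $Z$ itself — this is how one arrives at $\frac{1}{2^\ell}\sum_\sigma I(\mathrm{BEC}(Z(P^\sigma))\vee I(Z(Q^\sigma)))$ as written (reading $I(\mathrm{BEC}(z))$ as the $1$-complement quantity the paper uses). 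The lower bound is simpler: at depth $\ell$ the number of indices where either $H(p^\sigma)$ or $H(q^\sigma)$ is bounded away from $0$ — equivalently where either additive-noise channel $P^\sigma$ or $Q^\sigma$ is not perfect — is a lower bound on $|\cS_\delta(p)\cup\cS_\delta(q)|$ for appropriate $\delta$, giving $\frac{1}{2^\ell}\sum_\sigma H(p^\sigma)\vee H(q^\sigma)$. Here $p^\sigma$ denotes the noise distribution of the channel $P^\sigma$ in Ar\i kan's tree construction, and $H(p^\sigma)=1-I(P^\sigma)$ by the duality computation \eqref{drop}.

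Finally, for the monotonicity claim I would invoke the martingale structure of the polarization recursion: both $\frac{1}{2^\ell}\sum_\sigma (\cdot)$ averages are expectations of functionals applied to the depth-$\ell$ channel process, and because the relevant functionals ($z\mapsto$ BEC-capacity-complement for the upper bound, $H$ for the lower bound) interact with the single-step transform $P\mapsto (P^-,P^+)$ in the right convex/concave direction, each successive averaging step moves monotonically toward the limit. Concretely one checks that refining by one more level either increases the lower-bound average or decreases the upper-bound average, exactly as in the single-channel analysis of \cite{rudi1,ari}. The main obstacle I anticipate is purely bookkeeping: making sure the order of limits ($n\to\infty$ then $\delta\to 0$ in the source side versus the $\delta$-free polarization statements on the channel side) is handled cleanly, and verifying that \eqref{corresp} holds with the \emph{same} $\delta$ on both sides rather than two comparable thresholds. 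Once \eqref{corresp} is pinned down, the rest is a direct translation of \cite[Theorem 5]{rudi1}.
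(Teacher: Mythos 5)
Your approach is essentially the paper's: both reduce Lemma~\ref{rud} to Theorem~5 of \cite{rudi1} via the identity \eqref{corresp} and the duality $H(p^\sigma)=1-I(P^\sigma)$, then pass the channel-side compound-capacity bounds through the complement $C_{\textrm{pol}}(p,q)=1-\lim\frac{1}{n}|\mathcal{G}_\delta(P)\cap\mathcal{G}_\delta(Q)|$. One small bookkeeping slip: in \cite{rudi1} the BEC-degradation step yields the \emph{lower} bound on the fraction of jointly good channel indices (the upper bound on that fraction is the trivial $\frac{1}{2^\ell}\sum_\sigma I(P^\sigma)\wedge I(Q^\sigma)$), so you have the roles of the two channel-side bounds swapped, but since you then complement both sides the final source-side expressions come out correct anyway.
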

Note that the upper bound is straightforward, and the notation $H(p^\sigma)$ refers to $H(U_i|U^{i-1})$ for the index $i$ corresponding to $\sigma$. It is interesting to note that if BECs can be used to compute previous bounds, we cannot use the counter-example of \cite{rudi1} to show that polar codes do not achieve compound capacity in source coding, since BECs do not correspond to a valid source distribution via the duality of Section \ref{duality}.
However, we can use the duality and BECs to construct storage sets which are included in $\cS_\delta (p) \cup \cS_\delta (q)$, in a different manner than done in previous section. Let us give an example with $\ell=1$. For two source distributions $p$ and $q$, consider finding the BECs with parameter $Z(P)$ and $Z(Q)$ ($P$ and $Q$ as defined above). Then, as in \cite{rudi1}, the good indices for $P$ and $Q$ satisfy 
\begin{align}
& \mathcal{G} (P) \cap \mathcal{G} (Q) \supset \mathcal{G} (\text{BEC}(Z(P))) \cap \mathcal{G} (\text{BEC}(Z(Q))) \\
& \equiv \mathcal{G} (\text{BEC}(Z(P)\vee Z(Q)))
\end{align}
and from \eqref{corresp}, $\mathcal{G} (\text{BEC}(Z(P)\vee Z(Q)))$ gives a storage set to compress $p$ and $q$ without loosing information bits. This provides an interesting and different approach to constructing universal polar codes, although it may not be practical and has the drawback of requiring the source distribution for the reconstruction (as opposed to the $\prec_{cp}$ ordering). In a work in progress, we propose the use of spike measures $\sm(a)$ to replace the ``worst BECs" directly with ``worst source distributions''. The common feature between the spike measures and BECs is that they are both families that have a nested structures for the storage/good index sets and that span the whole range of entropy/mutual information between 0 and 1. Also note that as opposed to the channel polarization case, degradedness in source polarization is less restrictive, since there are less degrees of freedom for source distributions than channels.

Now, to show that polar codes do not achieve the compound capacity in source coding, we can still use the lower bound of Lemma \ref{rud}, but we need to pick two source distributions on ternary source alphabets. 
\begin{prop}
Polar codes do not achieve the compound capacity for source coding when the source alphabet has strictly more than 2 elements. 
\end{prop}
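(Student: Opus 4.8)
The plan is to exhibit a single compound source $\{p,q\}$ with $p,q\in\m(a)$, $a\ge 3$, for which the polar compound rate $C_{\textrm{pol}}(p,q)$ strictly exceeds $C(p,q)=H(p)\vee H(q)$; this refutes achievability of the compound capacity. The only ingredient is the monotone lower bound of Lemma \ref{rud}: writing $L_\ell:=\frac{1}{2^\ell}\sum_{\sigma\in\{-,+\}^\ell}\big(H(p^\sigma)\vee H(q^\sigma)\big)$, the lemma says that $L_\ell$ is nondecreasing in $\ell$ and converges to $C_{\textrm{pol}}(p,q)$, so in particular $C_{\textrm{pol}}(p,q)\ge L_1$. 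It therefore suffices to find $p,q$ for which $L_1>C(p,q)$, i.e.\ a gap already at the first polarization step.

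I would restrict to distributions with $H(p)=H(q)=:R$, so $C(p,q)=R$. At $\ell=1$ the two synthesized indices are $\sigma=-$, with $p^-=p\star p$ the law of $X_1\oplus X_2$, and $\sigma=+$, with $H(p^+)=H(U_2\mid U_1)$; the chain rule yields the conservation laws $H(p^-)+H(p^+)=2H(p)=2R$ and $H(q^-)+H(q^+)=2R$. Hence
\begin{align*}
2L_1&=\big(H(p^-)\vee H(q^-)\big)+\big(H(p^+)\vee H(q^+)\big)\\
&\ge H(q^-)+H(q^+)=2R,
\end{align*}
and, by the same argument with the roles of $p$ and $q$ swapped, $2L_1\ge 2R$ in all cases. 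Equality forces the two maxima to be attained simultaneously by the two terms of $p$ (or both by those of $q$), which together with the conservation laws forces $H(p^-)=H(q^-)$, i.e.\ $H(p\star p)=H(q\star q)$. Consequently: as soon as one produces $p,q\in\m(a)$ with $H(p)=H(q)$ but $H(p\star p)\ne H(q\star q)$, one obtains $C_{\textrm{pol}}(p,q)\ge L_1>R=C(p,q)$, which proves the proposition.

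It remains to check that, for $a\ge 3$, the functional $p\mapsto H(p\star p)$ is \emph{not} constant on the level set $\{p\in\m(a):H(p)=R\}$. This is precisely where the hypothesis $a\ge 3$ enters. For $a=2$ this level set is the two-point set $\{(1-t,t),(t,1-t)\}$, whose elements are swapped by the transposition of $\F_2$, an operation under which $p\star p$ is invariant; thus $H(p\star p)$ is a fixed function of $R$ there, $L_1$ is always tight, and no counterexample can arise --- in agreement with the universality of binary polar codes established in Corollary \ref{order2} and Theorem \ref{mainthm}. For $a\ge 3$ the level set is a one-dimensional curve along which $H(p\star p)$ genuinely varies; this is verified by a direct computation at two sample points, e.g.\ a distribution supported on two symbols versus one with full support of the same entropy, which give different self-convolution entropies. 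Picking two such points of equal entropy supplies the pair $p,q$, and adjoining $a-3$ symbols of zero mass lifts the example from $\F_3$ to $\F_a$ for general $a\ge 3$.

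I expect the only point requiring genuine care to be the last paragraph: one must certify that $H(p\star p)$ does not factor through $H(p)$ once $a\ge 3$, which is intuitively clear and immediate numerically but should be pinned down either by an explicit pair of ternary distributions or by a short computation showing that $\nabla_p H(p\star p)$ is not everywhere parallel to $\nabla_p H(p)$ on the simplex. Everything else reduces to bookkeeping with the chain rule and the already-proved Lemma \ref{rud}.
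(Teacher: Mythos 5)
Your route differs from the paper's and is structurally cleaner. The paper gives a bare numerical counterexample: $p=[0.08,0.36,0.56]$, $q=[0.11,0.62,0.27]$ (note their entropies differ, $0.8143$ versus $0.8126$), for which the $\ell=1$ lower bound of Lemma~\ref{rud} evaluates to $0.8174>0.8143=H(p)\vee H(q)$, and that is the whole argument. You instead impose $H(p)=H(q)=R$, invoke the chain-rule conservation $H(p^-)+H(p^+)=2R$ at the first polarization step, and reduce the proposition to finding two distributions on an entropy level set with $H(p\star p)\ne H(q\star q)$. This reduction is correct, and it has the bonus of explaining cleanly why $a=2$ is excluded: there the level set is the two-point orbit under the alphabet swap of $\F_2$, an operation under which $p\star p$ is invariant, so $L_1=R$ identically. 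What the paper's approach buys is that it needs no constraint $H(p)=H(q)$ and is a single plug-in computation; what yours buys is a transparent criterion and the $a=2$ dichotomy for free.

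The gap is the final step, which you yourself flag: you assert but do not establish that $p\mapsto H(p\star p)$ fails to be constant on entropy level sets when $a\ge 3$. Since the proposition is an existence statement, this deferred verification is exactly the content that must be supplied --- everything before it is bookkeeping with Lemma~\ref{rud} and the chain rule. The verification is indeed easy (for instance on $\F_3$, $p\approx(0.76,0.24,0)$ and $q\approx(0.84,0.08,0.08)$ both have $H_3\approx 0.50$ while $H_3(p\star p)\approx 0.77$ and $H_3(q\star q)\approx 0.72$), but a proof cannot end with ``intuitively clear and immediate numerically.'' Separately, the closing remark about lifting from $\F_3$ to $\F_a$ by padding with $a-3$ zero-mass symbols is not quite right as stated: circular convolution on $\F_a$ is taken mod $a$, not mod $3$, so the $\F_3$ computation does not transfer verbatim; the (easy) check should simply be redone for each prime $a\ge 3$.
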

{\it Counter-example:} Let $p=[0.08,0.36,0.56]$, $q=[0.11,0.62,0.27]$, such that $H(p)=0.8143$, $H(q)=0.8126$ and $C=H(p) \vee H(q)=0.8143$. The LHS of Lemma \ref{rud} for $\ell=1$ evaluates at $0.8174$ which is strictly larger than $C$.



\section{Sketching and sparse recovery}\label{cs}
In compressed sensing (CS), a $k$-sparse signal of high dimensionality $n$ can be recovered with overwhelming probability from a small number of random measurements $m=O(k \log(n/k))$ with a convex optimization method \cite{candes,donoho}. If the use of random measurement matrices simplifies the mathematical analysis, a drawback is that they have a heavy structure and it there is no efficient way to check if a given matrix realization satisfies the desired property (RIP) for the reconstruction (although one can show that this happens with high probability). Other drawbacks of random sensing matrices are discussed in \cite{piotr, calderbank}. It has hence become a challenging problem to construct explicit matrices that, yet, can perform competitively (in terms of measurement rate) with the random ones. Different deterministic matrices have been proposed in the literature, but in \cite{cormode,muth,devore07} the number of rows is at least quadratic in $k$ and in \cite{devore06} one needs $\Omega(n)$ bits to specify a matrix entry. In \cite{piotr}, binary matrices with $m$ brought down to $k 2^{O(\log \log n)^E}=kn^{o(1)}$, with $E>1$, are proposed and in \cite{calderbank}, a rather general condition for constructing deterministic matrices satisfying a statistical restricted isometric property (STRIP) is given.  

In this section, we are interested in designing an explicit measurement matrix using the polarization technique. The motivation being that the matrix used in previous section for polar source compression is deterministic and easily constructed.  Of course, in the basic source compression problem, the compression matrix is designed adaptively to the source distribution, whereas the CS results are universal as long as the signal is sparse. Hence, we would like to construct an explicit matrix with the polarization technique that is also universal. The tools of previous section for universal polarization will hence be used. 

 
Note that there are a few more distinctions between CS and the problems of previous sections. First, the source in our case is a random process, whereas in the original works on compressed sensing, the signal is deterministic. The case of random signals has been considered in several subsequent works for compressed sensing, such as in \cite{calderbank}. Another important difference, is that the source in our setting is valued in $\F_a$, as opposed to $\mR$ for arbitrary sparse signals. One way to address this problem is via quantization, which requires a careful treatment. 
It is related to the fact that in the CS setting, measurements can be done with arbitrary precision whereas in our setting they are quantified in bits.  On the other hand, we focus here on applications where the signal is valued in a discrete set to start with, such as in certain network monitoring problems \cite{varghese, gilbert}. For example, if one wishes to track the number of packets flowing between the different IP addresses of a network (e.g. to detect unusual behaviors), the state vector can be of dimension up to $2^{32}$. Since it is not feasible to maintain such a huge dimensional vector, one wishes to use a much smaller sketch vector that is still carrying all the significant information of the state vector, by exploiting the fact that the state vector is sparse. 
We hence keep such applications as our motivation and focus mainly on the sketching (sensing) and sparse recovery of such discrete signals. A possible lifting of the results in this paper to the real field setting is investigated in a work in progress.  

Before attacking the problem of constructing a universal deterministic sketching matrix via the polarization technique, we consider a specific example by choosing a particular distribution for the signal to get started. 

\subsection{Assuming knowledge of the signal distribution}\label{toogood}
In this section, we assume that the distribution of the signal is known.
The case of unknown distributions is discussed in Section \ref{univprior}. 
Assume that $X_1,\dots,X_n$ are i.i.d.\ under the following spike distribution
\begin{align}
p_{\e} := (1-\e,\e/(a-1),\dots,\e/(a-1)) \in \sm_0(a). \label{labelle}
\end{align}
Note that for $n$ i.i.d.\ samples drawn under $p_\e$, the number of non-zero components is in expectation $n\e$. 
\begin{definition}[Polar sketching matrix for a single distribution]\label{phi1}
Let $\delta \in (0,1)$ and $\phi_{\delta,n}(p_\e) = I_{\cS} \cdot G_n$ be the matrix obtained by deleting the rows of $G_n$ which are not indexed by $\cS=\cS_{\delta,n}(p_\e)$ (cf.\ Definition \ref{sset}).
\end{definition} 
Rephrasing the source polarization result, we obtain the following. 
\begin{lemma}\label{know}
Let $n$ be a power of 2, $X$ be an $n$-dimensional vector drawn i.i.d. under $p_\e$, and let $\phi=\phi_{\delta_n, n}(p_\e)$ be the polar sketching matrix defined for $p_\e$ and $\delta_n=2^{-n^{\beta}}$ with $\beta \in (0,1/2)$ (cf.\ Definition \ref{phi1}). For any $\alpha \in (0,1/2)$, there exists $\beta \in (0,1/2)$ such that the number of rows of $\phi$ is given by
$$m= n (1-\e)\log_a(\frac{1}{1-\e})+n \e \log_a(\frac{a-1}{\e}) + O(2^{-n^{\alpha}})$$
and using the polar decoding algorithm for $p_\e$, we can recover $X$ from $Y=\phi X$ with probability $O(2^{-n^{\alpha}})$ and with a complexity bonded as $O(a^2 n \log_2 n)$ (and if $a$ is a power of 2, the complexity can be reduced to $O(a \log_2 a \cdot n \log_2 n)$ by using the approach in \cite{mmac}.). 
 \end{lemma}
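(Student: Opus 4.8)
The plan is to recognize Lemma~\ref{know} as a direct corollary of the $a$-ary source polarization theorem (Theorem~\ref{ari2}) combined with the speed-of-polarization estimate from \cite{ari2,ari3} and the complexity bound recorded in Definition~\ref{polar-dec}. First I would observe that, by the definition of $\phi=\phi_{\delta_n,n}(p_\e)$, the number of rows of $\phi$ is exactly $|\cS_{\delta_n,n}(p_\e)|$, where $U^n=X^nG_n$ with $X^n\iid p_\e$. So the count of rows is $|\{i\in[n]:H(U_i\mid U^{i-1})\geq\delta_n\}|$. Applying Theorem~\ref{ari2} with the distribution $p=p_\e$ gives that $\frac{1}{n}|\cS_{\delta,n}(p_\e)|\to H(p_\e)$ for every fixed $\delta\in(0,1)$, and a short direct computation of the entropy of the spike distribution $p_\e=(1-\e,\e/(a-1),\dots,\e/(a-1))$ in base $a$ gives
\begin{align}
H(p_\e)=(1-\e)\log_a\frac{1}{1-\e}+\e\log_a\frac{a-1}{\e}.
\end{align}
This identifies the main term $n\,H(p_\e)$ in the stated formula for $m$.

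The next step is to upgrade the plain limit of Theorem~\ref{ari2} to the quantitative statement with a vanishing threshold $\delta_n=2^{-n^{\beta}}$. Here I would invoke the strong polarization / speed-of-polarization result (the $2^{-n^{\beta}}$ scaling proved in \cite{ari2} for the binary case and extended to $q$-ary alphabets in \cite{ari3,qpol,mmac}): for any $\beta<1/2$, the fraction of indices $i$ with $H(U_i\mid U^{i-1})\in(\delta_n,1-\delta_n)$ is $O(2^{-n^{\alpha}})$ for a suitable $\alpha<1/2$, and in fact the number of indices with $H(U_i\mid U^{i-1})<\delta_n$ equals $n(1-H(p_\e))+O(2^{-n^{\alpha}})$. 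Consequently $|\cS_{\delta_n,n}(p_\e)|=nH(p_\e)+O(2^{-n^{\alpha}})$, which is precisely the claimed size of $m$. The correspondence between $\alpha$ and $\beta$ (choosing $\beta$ large enough to force the desired $\alpha$) is exactly the bookkeeping already used in the proof of Theorem~\ref{mainthm}, so I would just cite that.

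For the recovery claim, the point is that $Y=\phi X^n = U^n[\cS_{\delta_n,n}(p_\e)]=U[\cS]$, i.e., observing $Y$ is the same as observing the stored high-entropy components of $U^n$. Running $\texttt{polar-dec}(p_\e,U[\cS],n)$ reconstructs $X^n$ by successively setting each non-stored coordinate to its conditional MAP value; this fails only if at some step the true value is not the maximizer, and the union bound over the $\le n$ low-entropy coordinates together with the $2^{-n^\beta}$ bound on each bit-error probability (again from \cite{ari3}) gives an overall block error probability $O(2^{-n^{\alpha}})$ for a suitable $\alpha<1/2$. The complexity bound $O(a^2 n\log_2 n)$, improved to $O(a\log_2 a\cdot n\log_2 n)$ when $a$ is a power of $2$ via \cite{mmac}, is quoted verbatim from Definition~\ref{polar-dec}.

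The only mildly delicate point — the ``main obstacle'', insofar as there is one — is making the quantitative polarization estimate apply with a source-side, base-$a$ entropy and a subexponentially small threshold $\delta_n$, rather than the qualitative $\delta$-fixed statement as literally written in Theorem~\ref{ari2}. This is not a new difficulty: it is exactly the strong polarization theorem, and one either cites \cite{ari2} (binary) plus \cite{ari3,qpol,mmac} ($q$-ary), or re-derives it through the source/channel duality of Section~\ref{duality} by passing to the additive-noise channel with noise $p_\e$ and applying the channel-side rate-of-polarization bound. Everything else is either the one-line entropy computation for the spike distribution or a verbatim invocation of already-cited results, so I would keep the written proof to a couple of sentences.
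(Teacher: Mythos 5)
Your proof is correct and matches the paper's approach: the paper gives no separate proof of Lemma~\ref{know}, presenting it as an immediate ``rephrasing of the source polarization result,'' which is exactly the chain you reconstruct (Theorem~\ref{ari2} applied to $p_\e$, the explicit entropy computation for the spike distribution, the rate-of-polarization bound from \cite{ari2,ari3,qpol}, and the complexity bound quoted from Definition~\ref{polar-dec}). One caveat worth flagging, though it is inherited from the lemma's own statement rather than introduced by you: the rate-of-polarization results you cite only give that the fraction of indices $i$ with $H(U_i\mid U^{i-1})$ strictly between $\delta_n$ and $1-\delta_n$ is $o(1)$, not $O(2^{-n^\alpha})$, so the additive term in the row count $m=nH(p_\e)+\cdots$ should really be $o(n)$; the $O(2^{-n^\alpha})$ scaling is justified only for the block-error probability (via the union bound over the Bhattacharyya/entropy parameters of the non-stored indices), and your proof correctly produces that part.
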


{\it Discussion:} Note that $m$ is simply the cardinality of $\cS$, which is approximately $n H(p_\e)$.
Defining $n\e=k$, we have 
\begin{align}
m&=  k \log_{a} \frac{n}{k} + o(\log_{a} \frac{n}{k}) \label{m1} \\
&=\frac{1}{\log_e a} k \log_{e} \frac{n}{k} + o(\log_{e} \frac{n}{k}) .
\end{align}
This expression is similar to the $O(k \log_e \frac{n}{k})$ expression encountered in the CS literature (\cite{candes}) for the number of measurements. It is even a tighter form since the constant is less than 2; of course, for the reasons discussed at the beginning of Section \ref{cs}, the comparison is inappropriate, since (in particular) we are modifying the assumption on the signal: it is drawn from a specific known distribution. The reason why the number of measurements decreases when $a$ increases may seem strange; however notice that a measurement for signals in $\F_a$ is made with a precision of $a$ bits. Hence, to compare the number of measurements for different values of $a$, one should use the same unit for the measurement. Let us check how the number of measurements scale with $a$. Rewriting \eqref{m1} with the dependency in $a$, we have
\begin{align}
m&=  k (1+\log_{a} (a-1)) + k \log_{a} \frac{n}{k} 
& = 2 k + o_a(1).
\end{align}
Hence, if we allow infinite precision for the measurements, for large $a$ we only need $2k$ measurements, but of course, the complexity blows up. 
If we express all measurements in nats, we have
\begin{align}
m&=  k (1+\log_{e} (a-1)) + k \log_{e} \frac{n}{k} 
\end{align}
and $m$, as a function of $a$, grows like $k \log_e a$.

Of course, Lemma \ref{know} requires knowledge of the signal distribution whereas CS results are universal. There is no reason to assume that \eqref{labelle} is distribution of the signal. The exact knowledge of the source distribution is in general unrealistic, and as discussed in previous section, even with an estimate of the distribution, it is crucial to show at least some robustness with respect to possible mismatched distributions. For applications, it may actually be interesting to have adaptive results, but this is also changing the rules of the game. We now investigate the universality problem.

%


\subsection{Universal prior}\label{univprior}
A possible way of defining $k$-sparse random sources, is to ask that the source distribution leads to an expected number of at most $k$ non-zero values. 
 Specifically, let $a$ be a prime number and let $\F_a=\{0,\dots,a-1\}$. Let 
$X^n=(X_1,\dots,X_n)$ be i.i.d. samples from a distribution $\mu$, with $\mu(0)=1-\e$.
Then, the number $K(X^n)$ of components of $X^n$ which are not equal to $0$ is in expectation
\begin{align}
\E K(X^n) = n\e.
\end{align}
Let
\begin{align}
\spa(a,\e) :=\{ \mu \in \m(a) : \mu(0) \geq 1-\e\},
\end{align}
and consider samples $X^n=(X_1,\dots,X_n)$ that are i.i.d. from a distribution in $\spa(a,\e)$. From previous remark, the number of components in $X^n$ that are not equal to $0$ is bounded by $n\e$. If we are interested in $\e$ as a measure of sparsity, then $K(X^n)/n$ concentrates exponentially fast around $\e$. The set $\spa(a,\e)$ is pictured in Figure \ref{spafig} for $a=3$.
\begin{figure}
\begin{center}
\includegraphics[scale=.3]{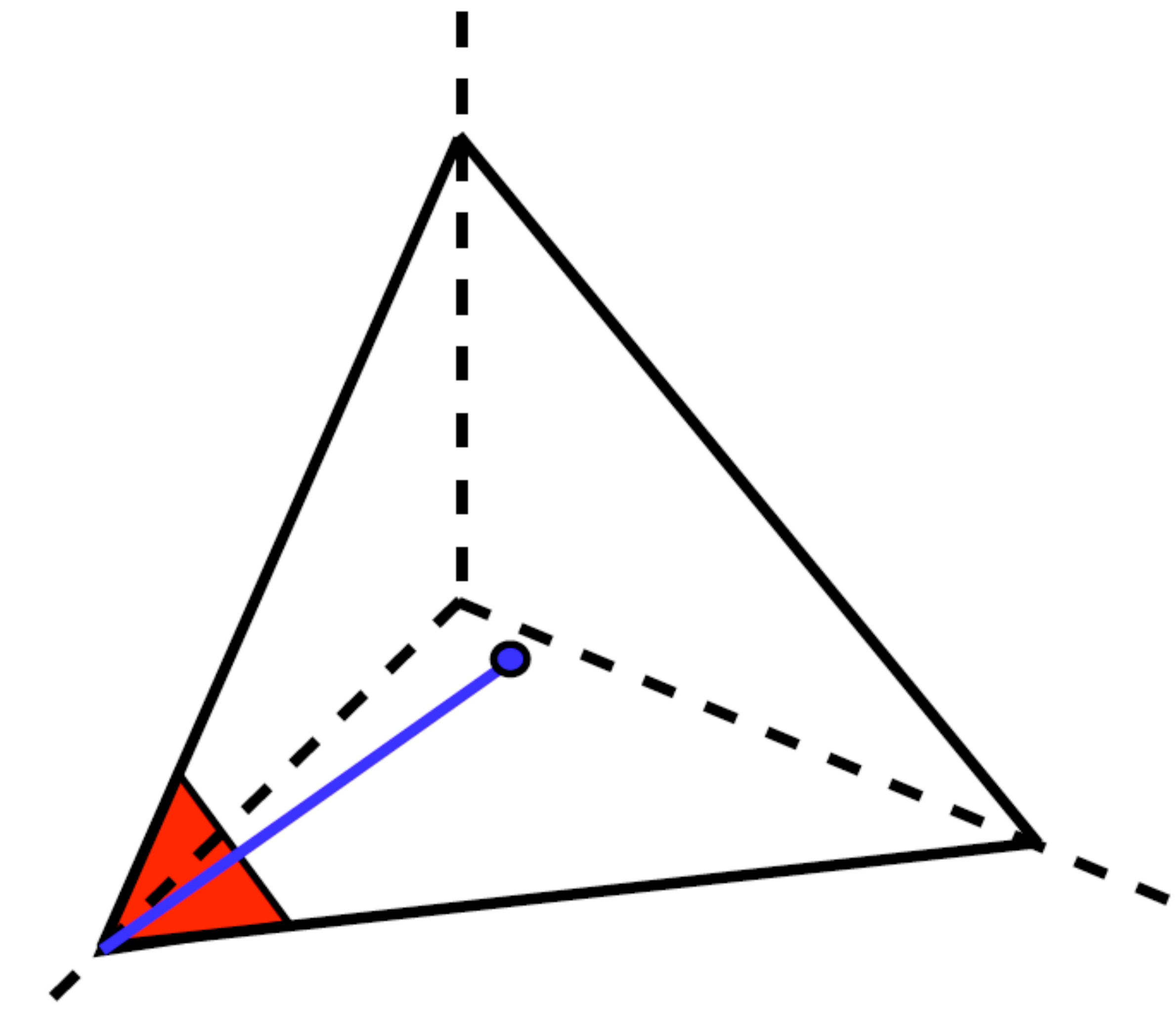}
\caption{The simplex with $\spa(3,\e)$ (lower triangle, in red) for $\e \approx 1/5$ and the spike measures at 0, namely $\sm_0(3)$ (middle line, in blue).}
\label{spafig}
\end{center}
\end{figure}

The results that we will derive do not depend on the fact that 0 is the special value for the distributions in $\spa(a,\e)$, in other words, we could equally well consider sources that are sparse with respect to an arbitrary $i \in \F_a$. 
For simplicity, we stick with $i=0$ for now, although considering arbitrary $i$'s may be useful when dealing with the problem of quantizing a signal to $\F_a$.
Also note that $\spa(a,\e)$ contains sources which can be supported on any subset of $\F_a$ (e.g., $a$ may be large but this set still contains sparse binary sources). It may be reasonable to assume that there is no such variation in the probability mass assigned to the non-special values, this will be discussed later.

\begin{remark}\label{shannon}
From an information-theoretic point of view, we can ask the question of finding the smallest rate at which one could compress a source whose distribution is in $\spa(a,\e)$ without any further knowledge on the distribution (and irrespectively of the compression scheme employed). As discussed in Section \ref{laref}, the answer to this question is given by the maximal entropy that can be reached with a distribution in $\spa(a,\e)$. It turns out that the distribution with maximal entropy is precisely \eqref{labelle}, as in Section \ref{toogood}. Hence, theoretically, the strong performances presented in Section \ref{toogood} can still hold in the universal setting. However, the scheme used to achieve such Shannon limiting performance may be highly complex, whereas the whole point here, is to consider explicit  schemes of low complexity. 
\end{remark}


\begin{thm}\label{csprop}
Let $X^n$, with $n$ a power of 2, be an $n$-sample drawn i.i.d. from a distribution which has at most $\e$ mass on the non-zero entries of $\F_a$, and 
let $\phi(a,\e)$ be the $m \times n$ polar sketching matrix constructed deterministically for $a$ and $\e$ (cf.\ Definition \ref{smatrix}). We have 
\begin{align*}
&m = C(a,\e) \cdot k \log_e \frac{n}{k} +O(k), \quad k= n \e,
\intertext{with}
&\lim_{\e \to 0}C(a,\e)=\frac{a-1}{\log_e a}
\end{align*} 
and with probability $1-O(2^{-n^{\beta}})$, $\beta \in (0,1/2)$, $X^n$ can be exactly reconstructed from $\phi X^n$ using the polar decoding algorithm 
(cf.\ Remark \ref{r4}) with a complexity of $O(a^2 n \log_2 n)$ (or $O(a \log_2 a \cdot n \log_2 n)$ if $a$ is a power of 2 and \cite{mmac} is used). 
\end{thm}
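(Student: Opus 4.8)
The plan is to reduce the $a$-ary sparse problem to the universal binary scheme of Section~\ref{main1}, accepting a factor $a-1$ in the rate because the non-zero mass of the source may concentrate on any of the $a-1$ non-zero symbols. First I would fix the matrix: for $j\in\{1,\dots,a-1\}$ let $q_{j,\e}:=(1-\e)\delta_0+\e\delta_j\in\m(a)$ be the corner spike pointing at $j$, put $\bar p:=q_{1,\e}\star\cdots\star q_{a-1,\e}$ (a short computation shows $\bar p$ is itself a symmetric spike, $\bar p=p_{\e'}$ with $\e'=(a-1)\e+O(\e^2)$) and $\delta_n:=2^{-n^{\alpha}}$, $\alpha<1/2$, and take $\phi(a,\e):=I_{\cS}\,G_n$ over $\F_a$ with $\cS:=\cS_{\delta_n,n}(\bar p)$, together with a small set $\cT$ of lowest-conditional-entropy coordinates of $U^n$ kept as checkers (the matrix of Definition~\ref{smatrix}). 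The first claim to establish is that this single storage set keeps the information bits of \emph{every} source in $\spa(a,\e)$. Since $\bar p=q_{j,\e}\star\bigl(\star_{j'\neq j}q_{j',\e}\bigr)$ we have $\bar p\prec_c q_{j,\e}$ for all $j$; the point is to push this to the whole set, i.e.\ to verify $\spa(a,\e)\subseteq\mathrm{DOM}_c(\bar p)$ via the Fourier test of Remark~\ref{r2} (the condition $\ft^{-1}(\ft(\bar p)/\ft(\mu))\ge 0$, which is tightest at the corners $q_{j,\e}$, where the quotient is the honest transform $\prod_{j'\neq j}\ft(q_{j',\e})$). Lemma~\ref{sensing} then yields $\cS_{\delta_n,n}(\mu)\subseteq\cS$ for every $\mu\in\spa(a,\e)$.

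For the number of measurements: by Theorem~\ref{ari2} and the polarization-speed bound of \cite{ari2}, $m=|\cS|+O(\log n)=nH(\bar p)+o(n)$; expanding $H(\bar p)$ for small $\e$ (using $\bar p(j)=\e+O(\e^2)$ for $j\neq0$) gives $H(\bar p)\log_e a=(a-1)\e\log_e\tfrac1\e+O(\e)$, so with $k=n\e$, hence $\log_e(n/k)=\log_e(1/\e)$, one obtains $m=\frac{a-1}{\log_e a}\,k\log_e\tfrac nk+O(k)$, with implied constant tending to $\tfrac{a-1}{\log_e a}$ as $\e\to0$.

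For the reconstruction I would run $\texttt{polar-dec-adapt}$ on $U[\cS]$ with the checkers and a fixed finite reference family (cf.\ Remark~\ref{r4}), namely $\{\bar p,q_{1,\e},\dots,q_{a-1,\e}\}$, and show it $\prec_{cp}$-covers $\spa(a,\e)$: the sources whose non-zero mass is spread out — in particular the maximum-entropy member $p_\e$ — satisfy $\bar p\prec_{cp}\mu$ ($\ft^{-1}(\ft(\bar p)/\ft(p_\e))$ is a symmetric spike, and symmetric spikes are infinitely divisible), while a source whose non-zero mass leans toward $j$ is $\prec_{cp}$-dominated by $q_{j,\e}$; each check is an instance of the divisibility criterion \eqref{lacond}. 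For the reference $r$ matching the true $\mu$ we have $\cS_{\delta_n,n}(r)\subseteq\cS$ (as $r\in\spa(a,\e)$ and $\bar p\prec_c r$), so Lemma~\ref{cplemma} applies and $P_e(r\mid\mu)\le P_e(\mu\mid\mu)=O(2^{-n^{\beta}})$ for every $\beta<1/2$ by the matched analysis of \cite{ari3}; and, exactly as in the proof of Theorem~\ref{mainthm}, a non-matching reference yields a decoded word that disagrees with the stored checker coordinates $U[\cT]$ with all but exponentially small probability, so $\texttt{polar-dec-adapt}$ selects a correct reference. Complexity is then $O(1)$ calls to $\texttt{polar-dec}$ at $O(a^2 n\log_2 n)$ each ($O(a\log_2 a\cdot n\log_2 n)$ for $a$ a power of $2$, via \cite{mmac}), plus a Hamming comparison on $\cT$, with $G_n$, $\cS$, $\cT$ computed deterministically once.

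I expect the main obstacle to be the $\prec_{cp}$-covering of the continuum $\spa(a,\e)$ by a constant-size family: $p_\e$ is not $\prec_c$-dominated by the $q_{j,\e}$ (a short Fourier computation), and conversely $q_{j,\e}$ is not infinitely divisible, hence not $\prec_{cp}$-dominated by $\bar p$ (nor by $p_\e$), so one is genuinely forced to split $\spa(a,\e)$ into the spread region (reference $\bar p$) and the $a-1$ concentrated regions (references $q_{j,\e}$) and to run \eqref{lacond} on each, the delicate point being to locate the region boundaries so that every $\mu$ falls in at least one $\prec_{cp}$-down-set. This obstruction is also the source of the lost factor: the storage set $\cS$ must dominate all $a-1$ corners simultaneously, and $H(\bar p)\approx(a-1)H(p_\e)$ for small $\e$, so the rate rises from the information-theoretically optimal $\tfrac1{\log_e a}$ (attainable with $p_\e$ alone were complexity no object, cf.\ Remark~\ref{shannon}) to $\tfrac{a-1}{\log_e a}$.
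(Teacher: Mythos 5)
Your approach is genuinely different from the paper's, and it has a gap that you yourself flag but do not close; the paper's route avoids that gap by a different choice of reference distribution.

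The paper's proof uses a \emph{single} reference and plain \texttt{polar-dec} (as stated in Remark~\ref{r4}), not \texttt{polar-dec-adapt} and not checkers. It defines the reference as $p_{cp}(\spa(a,\e))$, the minimal-entropy spike dominated by all of $\spa(a,\e)$ with respect to $\prec_{cp}$, and it locates it not as a finite $\star$-product of corners but as the endpoint of a continuous convolutional path $\tau(c)=F_a\diag(\exp(c(\lambda_j-1)))F_a^*\,q_{1,\e}^t$. The crucial feature is that the ``difference'' $\nu_c$ with $\ft(\nu_c)(\omega)=\exp(c(\lambda_\omega-1))$ is infinitely divisible by its exponential form ($\ft(\nu_c)^{1/k}=\ft(\nu_{c/k})$), so $p_{cp}\prec_{cp}q_{j,\e}$ for every corner $j$ is automatic, and Lemma~\ref{cplemma} applies directly with no family and no checkers. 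The parameter $\eta(\e)$ of this spike is then computed (via~\eqref{lacond} / the $\tau(c)$ intersection) to be $(a-1)\e+o(\e)$, which gives the claimed constant. Your $\bar p=\star_j q_{j,\e}$ has the \emph{same} leading-order parameter $(a-1)\e+O(\e^2)$, but as you correctly note, the quotient $\star_{j'\neq j}q_{j',\e}$ is not infinitely divisible (a short Fourier computation at order $\e^2$ shows $q_{j,\e}$ on $\F_3$ already fails~\eqref{lacond}), so $\bar p\not\prec_{cp}q_{j,\e}$. This is the reason you were forced into the \texttt{polar-dec-adapt}-with-checkers machinery — a complication the paper avoids by choosing the $\tau(c)$-endpoint rather than the $\star$-product.

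The gap: your plan rests on the claim that the constant-size family $\{\bar p,q_{1,\e},\dots,q_{a-1,\e}\}$ $\prec_{cp}$-covers all of $\spa(a,\e)$, and you explicitly describe this covering as ``the main obstacle'' without resolving it. This is a genuine missing step, and it is not at all clear it holds. The paper's own use of \texttt{polar-dec-adapt} in the improved Section~\ref{imp} employs a \emph{growing} discretization $\spa_d(a,\e')$, $d\to\infty$, precisely because a fixed finite family of references is not believed to $\prec_{cp}$-cover the sparsity simplex; the complexity there scales multiplicatively with $d$. So a constant-size family would be a strictly stronger claim than anything in the paper, and you would need to actually verify (via~\eqref{lacond}) that every $\mu\in\spa(a,\e)$ lies in at least one $\prec_{cp}$-down-set, which seems unlikely to hold for the three corners plus $\bar p$. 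A secondary issue: if it did hold, your scheme would make $\Theta(a)$ calls to \texttt{polar-dec}, giving $O(a^3 n\log_2 n)$ rather than the stated $O(a^2 n\log_2 n)$. To fix the argument, replace $\bar p$ by the $\tau(c)$-path endpoint $p_{cp}(\spa(a,\e))$: the convolutional factor from each corner is then infinitely divisible by construction, so a single reference suffices and the family, the checkers, and the covering problem all disappear. (Your computation of the measurement count, on the other hand, is correct to leading order and matches the paper's.)
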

\begin{remark}\label{r4}\text{}\\
1. The polar decoding algorithm (cf.\ Definition \ref{polar-dec}) must be evaluated as 
$$\texttt{polar-dec}(p_{cp}(\spa(a,\e)),\phi x^n[\cS_{\delta,n}(p_{cp}(\spa(a,\e)))],n)$$ for $\delta= \delta_n = 2^{-n^{\alpha}}$ with $\alpha<1/2$ large enough to reach the desired $\beta$ in the Theorem, and 
where $p_{cp}(\spa(a,\e))$ is the distribution of minimal entropy that is dominated by the entire set $\spa(a,\e)$ for $\prec_{cp}$ as in \eqref{pcp}, and $\cS$ as in Definition \eqref{sset}. \\
2. The multiplication $\phi X^n$ is carried out over $\F_a$.\\
3. The same result holds if the distribution of $X^n$ has at most $\e$ mass on an arbitrary $i \in \F_a$.\\
4. In Section \ref{imp}, we discuss improvements of the constant $C$.
\end{remark}


\begin{definition}\label{smatrix}
Given a set $D$ of probability measures on $\F_a$, we construct a sketching matrix $\phi(D)$ of dimension $n$ as follows:
\begin{itemize}
\item[(i)] Find $p_{cp}(D)$ as defined in \eqref{pcp}
\item[(ii)] Find $\cS=\cS_{\delta, n} (p_c(D))$ as in Definition \ref{sset} for $0<\delta< 1$
\item[(iii)] Define $\phi =  I_\cS G_n  ,$
where $G_n=    \bigl[\begin{smallmatrix} 
      1 & 1 \\
      0 & 1 \\
   \end{smallmatrix}\bigr]^{\otimes \log_2 n}$ and where $I_\cS$ is the matrix whose columns indexed by $\cS$ form the identity matrix and whose other columns are filled in with zeros. Note that $\phi$ is an $m \times n$ matrix, where $m=|\cS|$.
\end{itemize}
In particular, we define $\phi(a,\e):=\phi (\spa(a,\e))$ and to have the optimal error decay we pick $\delta = 2^{-n^{\alpha}}$ with $\alpha<1/2$.
\end{definition}

{\it Implementation of $\phi$.} \\
1. Step (i) can be easily computed, cf.\ Remark \ref{r2} and the proof below.\\
2. Step (ii) requires a comment: finding $\cS$ with an analytic formula is a hard open problem in polar codes. However, it is mostly a mathematical challenge, since one can run simulations to determine $\cS$ with a good accuracy, or find arbitrarily tight bounds on the entropy terms in $\cS$ in polynomial time \cite{vardy2}.\\
3. The construction of $G_n$ is straightforward because of its Kronecker structure, which also allows an efficient decoding algorithm running in $O(n \log_2 n)$. 

\subsection{Interpretation of Theorem \ref{csprop}}
In view of 
$$\lim_{\e \to 0} C(a,\e)= \frac{a-1}{\log_e(a)},$$
for a fixed small $\e$, the quantization level $a$ should be at most $1/\e$ in order to have a dimensionality reduction. 
Hence, if a signal sparser in its domain than its magnitude, 
where we define the magnitude-sparsity of a signal taking $a$ possible values by $1/(a-1)$, and the domain-sparsity as before by $\e=k/n$, then the approach of Theorem \ref{csprop} gives interesting results.

For $a$ small, this is interesting for most $\e$.
In particular for $a=2$, the sparsity in magnitude is maximal, namely 1, and for any $\e$ we have an optimal dimensionality reduction
$$m=1.44 \cdot k \log_e (n/k),$$
where the optimality refers here not only to the order $k \log_e (n/k)$ but also to the constant $1.44$ (recall that the measurements are taken in bits). By Shannon, one cannot further improve this bound (even with schemes of high complexity). 

For $a$ large, e.g. $a=257$, we get reasonable dimensionality reduction for very sparse data, for example, if 
$\e= 10^{-3}$ and $n=10^6$, we get a reduction of $68\%$ for the number of measurements (compared to $n$). But for $a=257$ and $\e=0.1$, there is almost no dimensionality reduction. However, we will see in next section that this is due to the analysis employed in the proof of Theorem \ref{csprop} rather than the use of the polar matrix.

\subsection{Proof of Theorem \ref{csprop}}
\begin{proof}
The number of measurements $m$ is given by $nH(p_{cp}(\spa(a,\e)))+o(n)$.
Note that by symmetry, $p_{cp}(\spa(a,\e))$ is a spike measure (i.e., an element of $\sm_0(a)$). 
We have $$p_{cp}(\spa(a,\e))=(1-\eta(\e), \eta(\e)/(q-1), \dots, \eta(\e)/(q-1))$$ where
$\eta(\e)$ is the smallest positive $\eta$ ensuring 
$$ (1-\eta , \eta /(q-1), \dots, \eta /(q-1)) \prec_{cp} p$$
for any $p \in \spa(a,\e)$.
Moreover, it is sufficient to check 
$$ (1-\eta , \eta  /(q-1), \dots, \eta  /(q-1)) \prec_{cp} (1-\e,\e,0,\dots,0) ,$$
i.e.,
$$\ft^{-1} \left(\frac{\ft(1-\eta, \eta/(q-1), \dots, \eta/(q-1))}{\ft(1-\e,\e,0,\dots,0)} \right)^{1/k}  \geq 0$$
for any $k \geq 1$.
Using \eqref{lacond}, the dependence in $k$ can also be removed.
Defining
\begin{align}
z&= \frac{\ft(1-\eta, \eta/(q-1), \dots, \eta/(q-1))}{\ft(1-\e,\e,0,\dots,0)} \\
&=\frac{(1,1-\eta \frac{q}{q-1},\dots,1-\eta \frac{q}{q-1})}{(1-\e + \e e^{-2 \pi i t/a})_{t=0}^{a-1}},
\end{align}
and denoting the component of $z$ by $z_j=r_j e^{i \theta_j}$, we need to ensure 
\begin{align}
& y(1),\dots,y(a-1) \geq 0 \label{lacond} \\
& \text{where} \,\,\, y=\ft^{-1} ((\log_e r_j + i  \theta_j)_{j=0}^{a-1}). \notag
\end{align}
Numerically, one can then easily find $\eta(\e)$ by means of the FFT algorithm. 
In Figure \ref{delta_e}, we have plotted $\e \mapsto \eta(\e)$ for different values of $a$.
Note that one can also find analytically $\eta(\e)$ using the following approach. 
Assume $a=3$.
Let us first find $p_{c}(\spa(3,\e))$. Here also, we have $p_{c}(\spa(3,\e))=(1-\bar{\eta}(\e),\bar{\eta}(\e)/2,\bar{\eta}(\e)/2 )$ where we need to find $\bar{\eta}(\e)$.
Note that all distributions that are worst than $(1-\e,\e,0)$ for $\prec_c$ are given by the convex hull of the
orbit of $(1-\e,\e,0)$ though cycles, that is $\mathrm{hull}((1-\e,\e,0),(0,1-\e,\e),(\e,0,1-\e) )$.  
Hence, the projection $p_{cp}$ of $(1-\e,\e,0)$, i.e., the distribution in this convex hull which belongs to the spike measures and has minimal entropy is found by taking the intersection between the line connecting $(1-\e,\e,0)$ to $(\e,0,1-\e)$ and the line of spike measures parametrized by $(1-d, d/2,d/2)$. An elementary computation yields
\begin{align}
\bar{\eta}(\e)=1-2 \e (1-\e) = 1-2 \e +o(\e).
\end{align}
Note that the scaling $1-2\e +o(\e)$ is clear, since for small $\e$ the line connecting $(1-\e,\e,0)$ to $(\e,0,1-\e)$ is almost parallel to the line connecting $(1,0,0)$ to $(0,0,1)$.
Indeed, one can easily generalizes this for $a>3$ to
\begin{align}
\bar{\eta}_a(\e)=  1-(a-1) \e +o(\e).
\end{align}

To find the projection $p_{cp}$ of $(1-\e,\e,0)$, we need to move from $(1-\e,\e,0)$ towards spike measures with tiny convolutional steps.
But once we have made a small step in the direction $(1-\e,\e,0) - (\e,0,1-\e)$ to reach $(x,y,z)$, we need to move next in the rotated picture, i.e., in the direction $(x,y,z)-(y,z,x)$, as illustrated in Figure \ref{rotatefig}.
\begin{figure}
\begin{center}
\includegraphics[scale=.3]{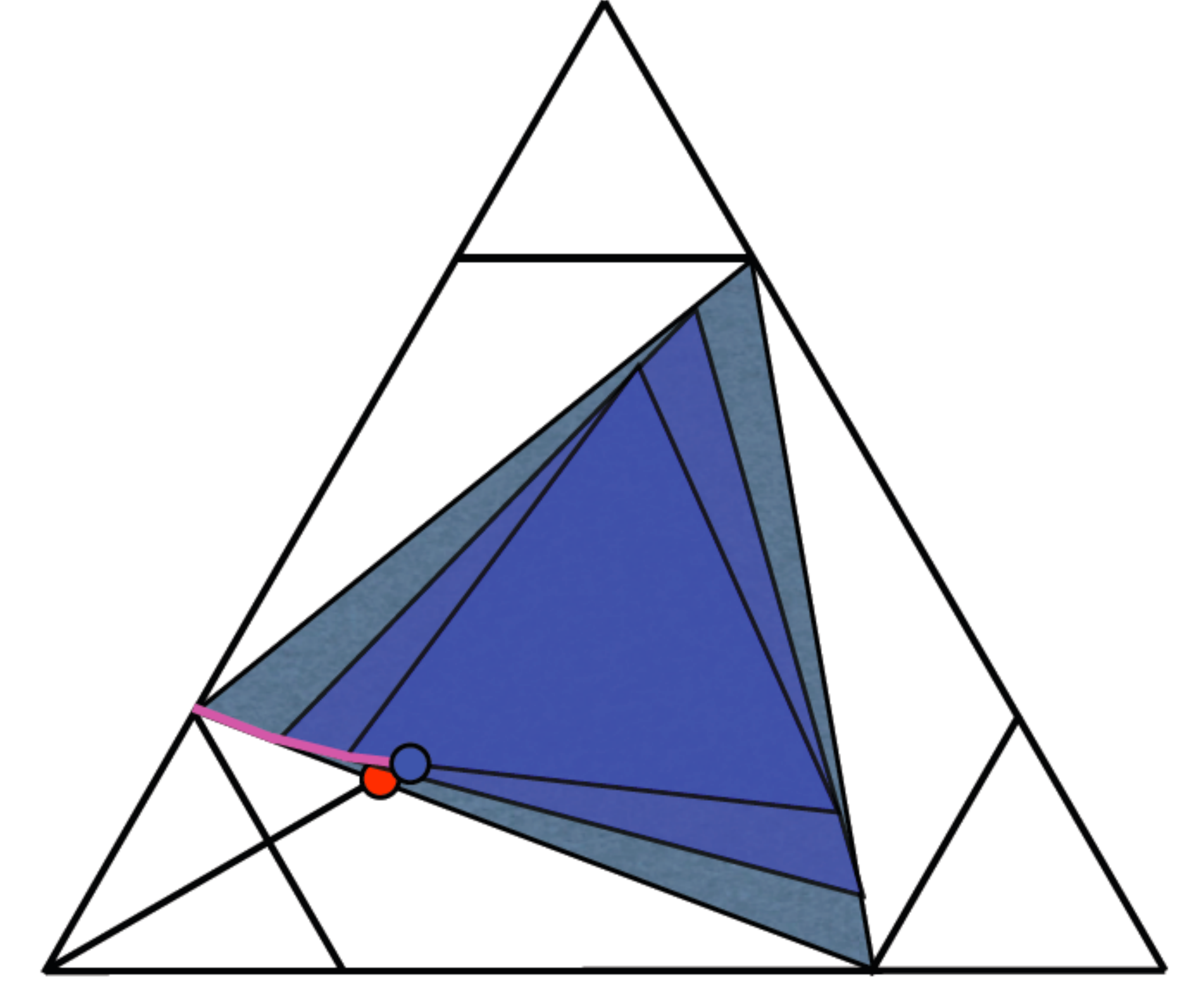}
\caption{The simplex with the $p_{c}$ projection (first point in red) and the $p_{cp}$ projection (second point in blue) of $[1-\e,0,\e]$.}
\label{spafig}
\end{center}
\end{figure}
Defining $f(x) = x + \gamma (\Pi - I )x$, where $\Pi=\circ(0,1,0)$, we are interested in $f^k (x)$ where $x=(1-\e,\e,0)$.
Hence, we look for $A^k$ where $A=I + \gamma (\Pi - I)$. Since $\Pi$ is circulant, so is $A$ and the eigenvector of $A$ are the Fourier (DFT) basis elements and the eigenvalues are $1+\gamma (\lambda_i -1)$, where $\lambda_i$ are the corresponding 3 roots of unity. Therefore, the eigenvalues of $A^k$ are $[1+\gamma (\lambda_i -1)]^k$, and keeping $\gamma k=c$, we obtain $$\lim_{k \to \infty} [1+c/k (\lambda_i -1)]^k = \exp(c(\lambda_i-1)).$$
Hence, $$\tau(c)=F_3 \diag (\exp(c(\lambda_i-1))) F_3^* (1-\e,\e,0)^t,$$ for $c\geq 0$, and where $F_3$ is the Fourier (DFT) matrix of dimension 3, parametrizes the path starting at $(1-\e,\e,0)$ and obtained with incremental convolutional steps which are ``targeting'' spike measures. Equating the second and first components of $\tau_c$, i.e., solving $\tau(c)_2=\tau(c)_3$ gives a closed form expression for $c$ and for $\eta(\e)=2 \tau(c)_2$, and we get as for $\bar{\eta}(\e)$,
\begin{align}
\eta (\e)= 2 \e +o(\e).
\end{align}
That is, for small $\e$, the penalty endured by considering the $p_{cp}$ projection rather than the $p_c$ one (for $\spa$) is negligible. This is not surprising, since for $\e$ small, the path from $(1-\e,\e,0)$ to spike measures is anyway small (as required for the $p_{cp}$ projection).
With similar arguments, we conclude that for any $a$, 
\begin{align}
\eta_a (\e)= (a-1) \e +o(\e).
\end{align}
Finally, we need to evaluate $H(p_{\eta(\e)})$
where $$p_{\eta(\e)}=(1-\eta(\e), \eta(\e)/(a-1),\dots,\eta(\e)/(a-1)).$$
Note we can compare the cost for universality of a low complexity scheme (obtained with the $p_{cp}$ analysis and the polar matrix) with respect to the limiting performance (Shannon): instead of $nH(p_{\e})$ 
we need $nH(p_{(a-1)\e})$ measurements, when $\e$ is small. For $a=2$, these two are identical, and this is consistent with Theorem \ref{mainthm}. For arbitrary $a$, we get
\begin{align}
H(p_{\eta(\e)}) = \frac{a-1}{\log_e a} \e  \log_e \frac{1}{\e} + O(\e )
\end{align}
as opposed to 
\begin{align}
H(p_{\e}) = \frac{1}{\log_e a} \e  \log_e \frac{1}{\e} + O(\e ).
\end{align}

\begin{figure}
\begin{center}
\includegraphics[scale=.365]{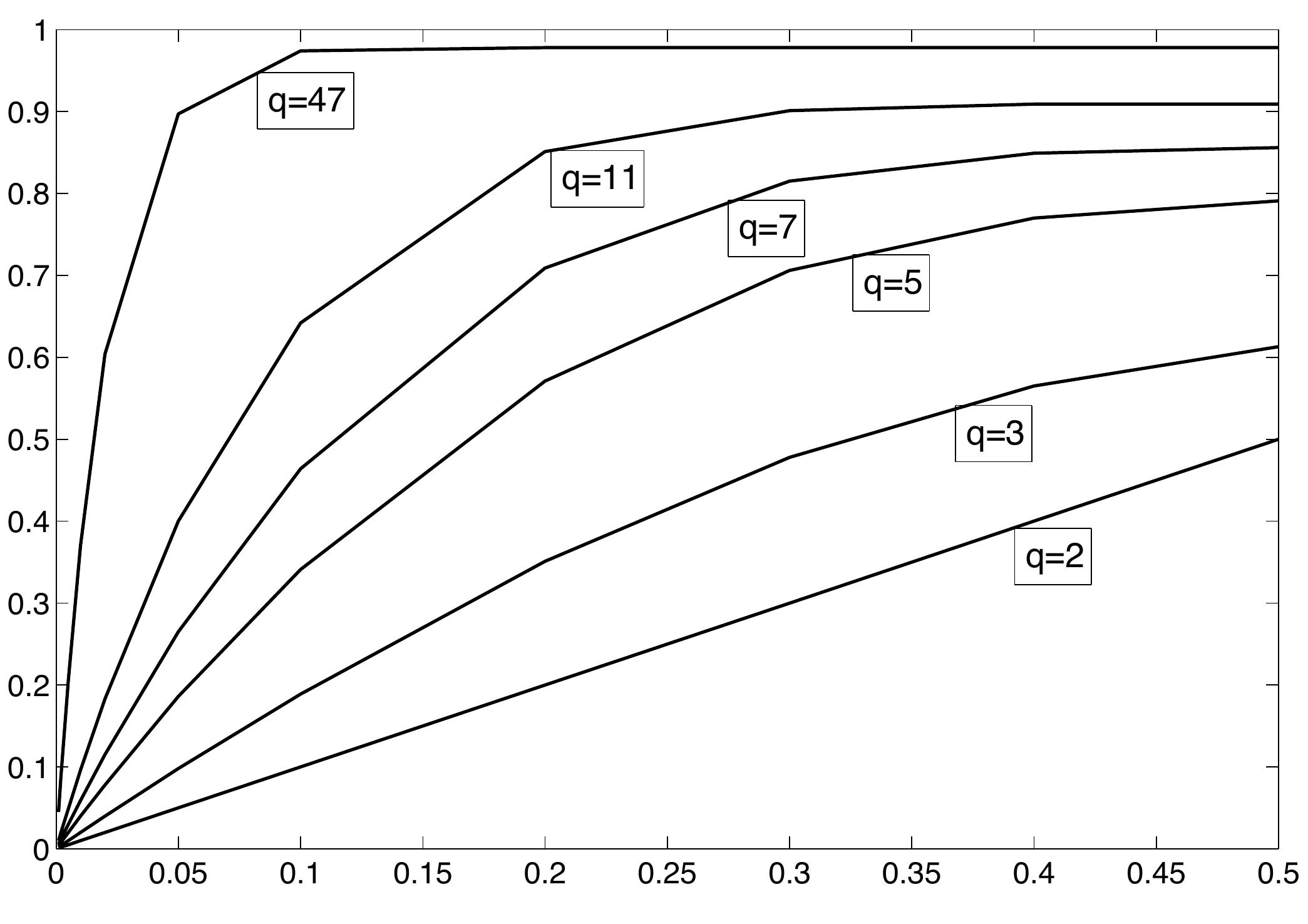}
\caption{Plots of $\e \mapsto \eta(\e)$ for different values of $a=q$. Note that $(a-1)\e$ is an upper bound to $\eta(\e)$, tight for small $\e$ as shown in the proof of Theorem \ref{csprop}. Hence for $\e$ not too small $\eta(\e)$ provides a better measurement rate than what is obtained with the crude bound of Theorem \ref{csprop}.}
\label{delta_e}f
\end{center}
\end{figure}

\begin{figure}
\begin{center}
\includegraphics[scale=.365]{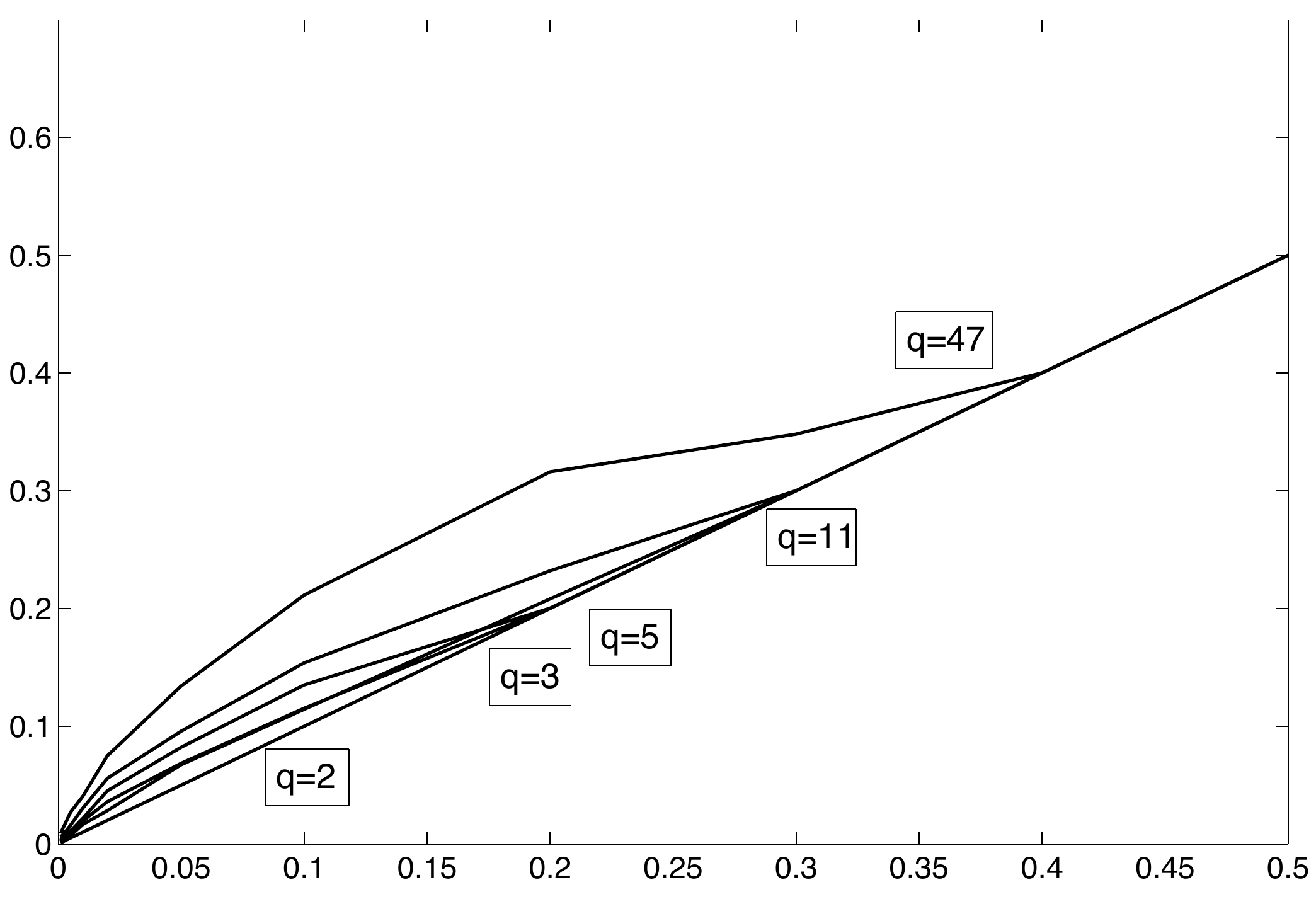}
\caption{Plots of $\e \mapsto \eta^*(\e)$ for different values of $a=q$}
\label{delta_estar}
\end{center}
\end{figure}

\end{proof}

\subsection{Improving Theorem \ref{csprop}}\label{imp}

In Remark \ref{shannon}, we concluded that the minimal number of measurements (in $a$-ary bits) needed to recover a source from $\spa(a,\e)$ is given by $n H(p_\e)$ where $p_\e$ is the spike measure with mass $1-\e$ at 0 and $\e/(a-1)$ elsewhere. To approach this performance with a polarization scheme of low complexity is challenging. We have shown that, for the universal problem of recovering sources from any distribution in $\spa(a,\e)$, one could still use the adaptive setting but designed for a specific distribution; namely $p_{cp}(\spa(a,\e))$. This distribution is dominated by the entire set $\spa(a,\e)$ with respect to $\prec_{cp}$, and we showed that because of this property, a scheme designed for $p_{cp}(\spa(a,\e))$ guarantees successful recovery for any distributions in $\spa(a,\e)$. In a sense, $p_{cp}(\spa(a,\e))$ is the worst case scenario. Of course, there may be other ways (than using $\prec_{cp}$) to order distributions and find a worst case distribution which has lower entropy (one may attempt to replace $\prec_c$ with $\prec_d$). There may even be other ways of tackling the universal problem than looking for an ordering and a worst case distribution. The advantage of using a worst case approach, is that we can then inherit the complexity attributes and convergence rate property from the adaptive setting. Ordering also allows to give a `hierarchy' between the different distributions, and helps designing robust schemes (by backing off from the estimated distribution to guarantee performances).

One point is that it may not be needed to consider the entire set $\spa(a,\e)$ for a given problem. For example, the set of distributions that are dominated by $p_{cp}(\spa(a,\e))$ (w.r.\ to $\prec_{cp}$), is already quite large and contains most distributions of sparsity $\e$. It does not contain the distributions of sparsity $\e$ which have small supports, but if these can be ruled out, then we can bring back the constant $C(a,\e)$ to $1/\log_{e}(a)$, which is the Shannon limiting performance.

We now discuss another approach to construct a universal sketching and reconstruction method for $\spa(a,\e)$. A before, there are two parts to discuss. First the sketching, i.e., to know which rows of $G_n$ can be deleted without loosing information about $X^n$. Then the reconstruction, i.e., to know how to run a decoding algorithm that ignores the exact distribution.

\subsubsection{Sketching}
Here is a brut-force approach to achieve universal sketching.
\begin{definition}[\texttt{brut-univ-sketching} algorithm]\text{}\\
Inputs: $\e$ (the sparsity degree), $a$ (the size of the signal alphabet), $n$ (the dimension).\\
Outputs: $\eta^*(\e)$.

We present two variants of the algorithm. \\ 
Variant A:\\
For $\eta$ from $\e$ until $\eta(\e)$ (with a given step size);\\
if $\cS(p_{\eta}) \supseteq \cS(q)$ for any $q$ in the convex hull of 
$(1-\e,\e,0,\dots,0)$, $(1-\e,0,\e,0,\dots,0)$,...,$(1-\e,0,\dots,0,\e)$;\\
$\text{}\quad$ output $\eta$;\\
 otherwise increase the step size.\\
Variant B:\\
For $\eta$ from $\e$ until $\eta(\e)$ (with a given step size);\\ 
if $\cS(p_{\eta}) \supseteq \cS(q^{(\e)})$ where $q^{(\e)}=(1-\e,\e,0,\dots,0)$;\\
$\text{}\quad$ output $\eta$;\\ otherwise increase the step size.
\end{definition}

Note: one could also consider a dichotomic approach for the search of $\eta^*(\e)$,
and by symmetry, one can restrict the search of $q$ to only one portion of the convex hull.
One also has to specify the precision $\xi$ for the computations of the sets $\cS=\cS_{\xi,n}$, we 
omitted it in the algorithm to simplify the notation.

Variant B has low complexity, since it conducts a search in a one-dimensional space (for $\eta$) and since the computation of $\cS$ can be done at low computational costs. Variant A requires a larger search for $q$, which can be constraining for $a$ large. 

{\it Result:} Variant A of \texttt{brut-univ-sketching} provides $\eta^*(\e)$ such that $$\cS(p_{\eta^*(\e)}) \supseteq \cS(p), \quad \forall p \in \spa(a,\e).$$
{\it Conjecture:}  Variant B of \texttt{brut-univ-sketching} leads to the same output than Variant A. 

In Figure \ref{delta_estar}, we show $\eta^*(\e)$ (obtained with Variant B of \texttt{brut-univ-sketching}) and the Shannon limit consisting of the diagonal, and reached for $a=2$. As observed, the improvement is significant with respect to $\eta(\e)$ (obtained with the $p_{cp}$ projection). Indeed, this brings the number of measurement very close to the optimal performance. Emre Telatar is gratefully acknowledged for his help in producing these plots. 

For the decoding part, there are no guarantees that decoding with $p_{\eta^*(\e)}$ allows a correct recovery. One can use the algorithm \texttt{polar-dec-adapt} to learn the distribution, but one needs to first add checkers in the set of stored components. Checkers are components that need not to be stored (because they have low conditional entropy) but that we still store to help the decoder get information about the source distributions. As long as the number of checkers is $o(n)$, the measurement rate is not affected. 

\subsubsection{Reconstruction} 
We now proceed to use \texttt{polar-dec-adapt} to decoder the sensed components of previous part. We first proceed to a patching of $\spa(a,\e)$.
Consider a uniform discretization of the convex hull of
$(1-\e,\e,0,\dots,0)$, $(1-\e,0,\e,0,\dots,0)$,...,$(1-\e,0,\dots,0,\e)$. 
Enumerate a uniform discretization of this convex hull as $D_k:=\{p_1,\dots,p_d\}$.  
Call $\spa_d(a,\e)$ the sets of distributions that dominates any of the elements in $D_k$ with respect to $\prec_{cp}$. We then have
$$\spa_d(a,\e) \to \spa(a,\e),$$
meaning that the set $\spa_d(a,\e)$ is dense in $\spa (a,\e)$. 
For a targeted $\e$, we then pick a $\e'$ slightly larger than $\e$ and a $d$ large enough such that
$$\spa_d(a,\e') \supseteq \spa(a,\e).$$
We then use \texttt{polar-dec-adapt} with the output of \texttt{brut-univ-sketching} and $o(n)$ checkers to learn which of the distributions $p_k$ is a good `model' for the sensed data. The term model is used because by construction of $\spa_d(a,\e')$, the sensed data on the checkers must look typical with at least one of the $p_i$'s, although there might be more than one, and although none of these $p_i$'s may be the true distribution of $X^n$ (but they will be dominated with respect to $\prec_{cp}$ by the true distribution, which is good enough to ensure correct decoding). 
One has to pick $\e'-\e$ small enough and $d$ large enough to ensure a small increase in the number of measurements (one needs to study the scaling of these parameters for a more precise statement). The overall complexity of this decoding scheme scales multiplicatively with $d$. Hence, as long as $d$ is not of the order of $n\log_2 n$, the overall complexity remains low. 

In a work in progress, we also consider another approach for universal decoding via an algebraic characterization of the possible likelihood ratios computed in \texttt{polar-dec}.

\section{Discussion and extensions}
\subsection{Universal polar coding}
We summarize here two ideas introduced in this paper to construct universal polar coding schemes
\begin{enumerate}
\item Convolutional path ordering: to tell when a polar coding scheme designed for one distribution can succeed for another one
\item Checkers: to learn some information about the distribution by storing components that did not need to be stored
\end{enumerate}

In particular, we developed an algorithm which allows to compress universally binary sources at the lowest achievable rate, with low complexity and with guaranteed low error probability.

\subsection{Sparse recovery and sketching}

We applied the tools developed for universal polar coding to the problem of sketching sparse signals,
constructing a deterministic sketching matrix by deleting appropriate rows of the polar matrix $G_n$.
We summarize here some conclusions and extensions on this approach. 

\begin{enumerate}
\item {\it An sketching method tuned to discrete signals.}\\
Compressed sensing exploits the sparsity of signals in their {\it domain} to acquire them efficiently. 
If for the application of interest, the signal is also sparse in its {\it magnitude}, that is, if it takes values in a set of small cardinality, this can also be exploited as shown in this paper. For example, if the signal is binary, we developed a sketching method with a deterministic low complexity matrix, an optimal number of measurements (for the scaling and the constant) and a low complexity recovery algorithm with a proved exponentially small (in $\sqrt{n}$) probability of error. We extended this results to $a$-ary vectors, noticing a better fit for small $a$ and proposing an improved approach for larger $a$ (Section \ref{imp}). 
We also underline that, for a given application, the method proposed here can be used adaptively by designing an appropriate probabilistic model for the signal. This can improve the measurement rate. 

\item {\it Lifting this work to the reals?}\\
Most works in the CS literature constructing explicit sensing matrices are based on algebraic constructions \cite{piotr,devore07,calderbank}. In these works, matrix acting on the reals can then be obtained. Of course, we also made the point (previous item) that for certain application, it may be more natural to work with the discrete setting directly. Yet, an interesting extension would be to study a lifting of our results to the real case. 
A possible approach would be via a quantization procedure, where problems of robustness to noise must be investigated. Another possible problem would be to attempt detecting the signal support only (which is a binary signal). 


\end{enumerate}


\begin{thebibliography}{9}

\bibitem{abbe}
E.~Abbe, \emph{Universal source polarization and sparse recovery}, Information theory workshop, Dublin, August 2010. 


\bibitem{mmac}
E.~Abbe and E.~Telatar, \emph{Polar codes for the m-user MAC,} in Proc.\ 2010 International Zurich Seminar on Communications, Zurich, 2010. Available at arXiv:1002.0777v2 [cs.IT].

\bibitem{ari}
E.~Ar{\i}kan, \emph{Channel polarization: A method for constructing capacity-achieving codes for symmetric binary-input memoryless channels,} IEEE Trans. Inform. Theory, vol.~IT-55, pp.~3051--3073, July 2009.

\bibitem{ari2}
E.~Ar{\i}kan and E.~Telatar, \emph{On the rate of channel polarization,} in Proc.\ 2009 IEEE Int. Symp. Inform. Theory, Seoul, pp.~1493--1495, 2009.

\bibitem{ari3}
E.~Ar{\i}kan, \emph{Source polarization,} in Proc.\ IEEE Int. Symp. Inform. Theory, Austin, 2010.


\bibitem{calderbank} R.~Calderbank, S.~Howard and S.~Jafarpour,   
\emph{Construction of a Large Class of Deterministic Sensing Matrices That Satisfy a Statistical Isometry Property}
IEEE Journal of Selected Topics in Signal Processing, vol.~4, no.~2, pp.~358 - 374 , April 2010.


\bibitem{candes} E.~Candes and T.~Tao, \emph{Near-Optimal Signal Recovery From Random Projections: Universal Encoding Strategies}, IEEE Trans. Inform. Theory, vol.~52, no.~12, pp.~5406-5425, December 2006. 

\bibitem{cormode} G.~Cormode and S.~Muthukrishnan, \emph{Combinatorial algorithms for compressed sensing}, 
SIROCCO, 2006

\bibitem{devore06}R.~DeVore, \emph{Optimal computation}, Proceedings of the International Congress of Math- 
ematicians, 2006. 

\bibitem{devore07}R.~DeVore, \emph{Deterministic constructions of compressed sensing matrices}, Journal of Complexity, Volume 23, Issue 4-6, August 2007.

\bibitem{donoho} 
D.~Donoho, \emph{Compressed Sensing}, IEEE Trans. Inform. Theory, vol.~52, no.~4, pp.~1289-1306, April 2006.

\bibitem{varghese} 
C.~Estan and G.~Varghese, \emph{New directions in traffic measurement and accounting: Focusing on the elephants, ignoring the mice}, ACM Transactions on Computer Systems, 2003.

\bibitem{gilbert} 
A.~Gilbert, Y.~Kotidis, S.~Muthukrishnan, and M.~Strauss, \emph{Quicksand: Quick summary and analysis of network data}, DIMACS Technical Report 2001-43, 2001.

\bibitem{haupt} J.~Haupt, W.U.~Bajwa, M.~Rabbat, and R. Nowak, \emph{Compressed sensing for networked data}, IEEE Signal Processing Mag., vol. 25, no. 2, pp. 92-101, March 2008. 

\bibitem{rudi1}
S.~H.~Hassani, S.~B.~Korada, R.~Urbanke, \emph{The Compound Capacity of Polar Codes}, arXiv:0907.3291v1 [cs.IT], July 2009.

\bibitem{shamai}
E.~Hof, S.~Shamai, \emph{Secrecy-Achieving Polar-Coding for Binary-Input Memoryless Symmetric Wire-Tap Channels},
available at arXiv:1005.2759v2 [cs.IT], May 2010.

\bibitem{piotr} P.~Indyk, \emph{Explicit constructions for compressed sensing of sparse signals}, Proceedings of the 19th Annual ACM-SIAM Symposium on Discrete Algorithms, 2008. 

\bibitem{korada} S.~B.~Korada, R.~Urbanke, \emph{Polar Codes are Optimal for Lossy Source Coding}, Information Theory Workshop, Taormina, October 2009.

\bibitem{elgamal} O.~O.~Koyluoglu, H.~El Gamal, \emph{Polar Coding for Secure Transmission and Key Agreement}, available at arXiv:1003.1422v1 [cs.IT], March 2010.

\bibitem{vardy}
H.~Mahdavifar, A.~Vardy, \emph{Achieving the Secrecy Capacity of Wiretap Channels Using Polar Codes}, IEEE International Symposium on Information Theory, Austin, June 2010.

\bibitem{marshall}
A.W.~Marshall and I.~Olkin, \emph{Inequalities: Theory of Majorization and Its Applications}, Academic Press, 1980.

\bibitem{muth} S.~Muthukrishnan, \emph{Compressed sensing algorithms for functions}, Allerton Conference 
on Communication, Control and Computing, 2006.

\bibitem{qpol}
E.~\c{S}a\c{s}o\u{g}lu, E.~Telatar, E.~Ar\i kan, \emph{Polarization for arbitrary discrete memoryless channels}, Information Theory Workshop, Taormina, October 2009.

\bibitem{2mac}
E.~\c{S}a\c{s}o\u{g}lu, E.~Telatar, E.~Yeh, \emph{Quasi-polarization for the two user binary input multiple access channel}, IEEE Information Theory Workshop, Cairo, January 2010. Available at arXiv:1006.4255v1 [cs.IT].

\bibitem{vardy2}
I.~Tal, A.~Vardy, \emph{How to Construct Polar Codes}, Information theory workshop, Dublin, August 2010. 


\end{thebibliography}
\end{document}